\newtheorem{theorem}{Theorem}
\newtheorem{Coroll}{Corollary}
\newtheorem{Lemm}{Lemma}
\def\bb0{{\mathbb{0}}}
\def\bb{{\mathbf{b}}}
\def\bee{{\mathbf{e}}}
\def\bff{{\mathbf{f}}}
\def\bg{{\mathbf{g}}}
\def\bh{{\mathbf{h}}}
\def\bv{{\mathbf{v}}}
\def\bw{{\mathbf{w}}}
\def\b0{{\mathbf{0}}}
\def\bbC{{\mathbb{C}}}
\def\bbE{{\mathbb{E}}}
\def\bbP{{\mathbb{P}}}
\def\cC{\mathcal{C}}
\def\cF{\mathcal{F}}
\def\cI{\mathcal{I}}
\def\cJ{\mathcal{J}}
\def\cL{\mathcal{L}}
\def\cN{\mathcal{N}}
\def\cQ{\mathcal{Q}}
\def\sfb{{\mathsf{b}}}
\def\sft{{\mathsf{t}}}
\def\sf0{{\mathsf{0}}}
\def\rmB{\mathrm{B}}
\def\rmD{\mathrm{D}}
\def\rmQ{\mathrm{Q}}
\def\rmR{\mathrm{R}}
\def\rmb{{\mathrm{b}}}
\def\rmd{{\mathrm{d}}}
\def\rmt{{\mathrm{t}}}
\def\rm0{{\mathrm{0}}}
\DeclareMathOperator*{\argmax}{arg\,max}
\begin{document}
%

\title{Limited Feedback Over Temporally Correlated Channels for the Downlink of a Femtocell Network}
\author{Salam~Akoum, Marios~Kountouris and~Robert~W.~Heath,~Jr.
\thanks{Copyright (c) 2010 IEEE. Personal use of this material is permitted. However, permission to use this material for any other purposes must be obtained from the IEEE by sending a request to pubs-permissions@ieee.org.}
\thanks{S. Akoum and R. W. Heath, Jr. are with the Wireless Networking and Communications Group, Department of Electrical and Computer Engineering, The University of Texas at Austin, 1 University Station C0803, Austin, TX, 78712-0240, USA e-mail: \{salam.akoum, rheath\}@mail.utexas.edu. This work was partly supported by the Semiconductor Research Company (SRC) Global Research Consortium (GRC) task ID 1836.024. It was partly supported by the ECOSCELLS project. }
\thanks{Marios Kountouris is with the department of Telecommunications,  SUP{\'E}LEC, Plateau de Moulon, 3 rue Joliot-Curie, 91192 Gif Sur Yvette, France. Email: marios.kountouris@supelec.fr}}

\maketitle

%
%
%

%
\begin{abstract}
Heterogeneous networks are a flexible deployment model that rely on low power nodes to improve the user broadband experience in a cost effective manner. Femtocells are an integral part of heterogeneous networks, whose main purpose is to improve the indoor capacity. When restricting access to home users, femtocells cause a substantial interference problem that cannot be mitigated through coordination with the macrocell base station. In this paper, we analyze multiple antenna communication on the downlink of a macrocell network, with femtocell overlay. We evaluate the feasibility of limited feedback beamforming given delay on the feedback channel, quantization error and uncoordinated interference from the femtocells. We model the femtocell spatial distribution as a Poisson point process and the temporal correlation of the channel according to a Gauss-Markov model. We derive the probability of outage at the macrocell users as a function of the temporal correlation, the femtocell density, and the feedback rate. We propose rate backoff to maximize the average achievable rate in the network. Simulation results show that limited feedback beamforming is a viable solution for femtocell networks despite the CSI inaccuracy and the interference. They illustrate how properly designed rate backoff improves the achievable rate of the macrocell system.

\end{abstract}
\newpage
%
\section{Introduction}
The best way to increase the capacity of cellular systems is by decreasing the cell size \cite{Mulder2010, Claussen2008, Chandrasekhar2008}. Unfortunately, relying on cell splitting in traditional network deployment approaches is complex in high density environments. Heterogeneous networks provide a more flexible deployment model that relies on base stations of diverse properties to improve the user broadband experience in a ubiquitous and cost effective manner \cite{Ghosh2010}. They consist of macro base stations overlaid with low power nodes such as pico base stations, distributed antennas, relays, and (of primary interest in this paper) femtocells. Unlike distributed antennas and relays, femtocells are user deployed and unplanned by the network operators. They are short range, operate on licensed spectrum and are usually restricted to serve home or enterprise users only. Unfortunately, restricting access to the femtocells to subscribed home users causes significant interference between adjacent femtocells, as well as cross tier interference between the femtocells and the macrocells \cite{Chandrasekhar2008, delaroche2010}. Moreover, interference management in femtocell networks does not conform with traditional cellular networks \cite{Claussen2010}, \cite{Yavuz2009}, as it cannot rely on coordination between the macrocells and the femtocells and needs to be decentralized.

As interest in multiple input multiple output (MIMO) communication has grown, upcoming cellular standards have embraced multiple antennas on the base stations and the mobile users to increase data rates and improve performance of the radio link \cite{juho2009}. Multiple antennas are being considered at the femtocells  for coverage improvement and interference management \cite{Chandrasekhar2008, Chandrasekhar2009}. One of the most practical transmission strategies for realizing the benefits of MIMO systems is limited feedback linear precoding \cite{OverviewLFWC}. Limited feedback suffers, however, from a CSI inaccuracy due to feedback delay and quantization error \cite{Caire2007, Nguyen2004}. Its performance further degrades in cellular systems, when the interference is not coordinated, and the transmitter does not have instantaneous knowledge about the interference at the receiver \cite{Akoum2010}. In this case, the transmitted rate does not take into account the accurate interference information, resulting in an increased probability of outage. In this paper, we consider limited feedback beamforming on the downlink of a heterogeneous network with private access femtocells, and we evaluate its performance, given the cross tier interference in the network, and the feedback delay.

\subsection{Prior Work}

Prior research on femtocell networks has focused on decentralized and self-optimization strategies for interference management as well as femtocell access control \cite{Yeh2008, Claussen2007, Ho2007, Claussen2008, Xia2010, LopezPerez2009, Simeone2010p}. The feasibility of deploying femtocells in the same frequency bands as the existing macrocell, and methods for femtocell power auto-configuration and public access were investigated in \cite{Claussen2007, Claussen2008}. In \cite{Ho2007}, the authors analyzed the effect of the additional handovers caused by femtocells on the dropped call rate of the macrocell users.  Interference avoidance techniques such as self-optimization and dynamic frequency planning were suggested in \cite{LopezPerez2009} to improve the capacity and decrease interference for an OFDMA femtocell. The performance of the femtocell access control strategy was investigated in \cite{Simeone2010p}, with multi-cell and single-cell processing. Femtocell access strategy was discussed in  \cite{Xia2010}, where it was shown that for OFDMA femtocells, the femtocell access control should be adapted to the cellular user density.  

Thus far there has been limited work on the application of multiple antennas to heterogeneous networks with femtocells. The benefits of using multiple antennas at the femtocells and the macro base stations, in terms of coverage and spatial reuse, were evaluated in \cite{Chandrasekhar2009}. The analysis, based on zero forcing precoding, assumed full CSI at the transmitter,  and suggested that  single user transmission outperforms multi user transmission on the downlink of heterogeneous networks.  Precoding techniques at the femtocells, based on having perfect or quantized CSI at the transmitter, were investigated in \cite{Jiang2010}, \cite{Husso2010} to reduce the interference, perform local coordination, or analyze the best transmission strategy for the femtocells. The work considered two adjacent femtocell base stations only, and failed to capture the randomness in the deployment of the femtocells, as well as the feedback delay. Furthermore, evaluating the effect of cross tier interference on the performance of the cellular network borrows from the literature on transmission capacity for ad hoc networks with multiple antennas \cite{Hunter2008, Vaze2009, Huang2010, Kountouris2010, Jindal2011}, where the outage probability and transmission capacity for multiple antenna diversity and spatial multiplexing techniques were derived. An expression for the asymptotic spectral efficiency in an interference limited regime was derived in \cite{Govindasamy2010}. The analysis considered interference from spatially distributed nodes with CSI at the transmitter, and concluded that CSI increases the spectral efficiency. 

To study the feasibility of limited feedback beamforming on the downlink of a heterogeneous network,  the analysis should take into account the effect of feedback delay and quantization error. Feedback delay arises from signal processing algorithms, propagation, and channel access protocols. The achievable rate and bit error rate performances of MIMO systems with feedback delay have been investigated in  \cite{FiniteRateRao, MarkovKroenecker, Nguyen2004 , Zhang2009, Caire2007, LFTCJ}. The delay has been found to reduce the achievable throughput \cite{LFTCJ, MarkovKroenecker}, and to cause interference between spatial data streams \cite{Nguyen2004}. Delay has been shown in \cite{Zhang2009} to incur a significant performance loss on multi user MIMO systems, due to residual inter user interference, and in \cite{Caire2006} to limit the performance of joint precoding and scheduling for the MIMO broadcast channel. The joint effect of delay and other cell interference assuming one strong interferer has been shown in \cite{Akoum2010} to increase the decay in the achievable rate using a Markov chain model. 
In this paper, to investigate the applicability of limited feedback methods, we consider the joint effect of feedback delay and cross tier interference. This has not been investigated in the literature for randomly distributed interferers. 

\subsection{Contributions}
In this paper, we consider limited feedback beamforming on the downlink of a heterogeneous network, with a femtocell overlay. We compute the probability of successful reception and the achievable rate for a macrocell user in the presence of cross tier interference from the femtocells. We account for the joint effect of feedback delay, quantization error and cross tier interference as follows. The propagation information that reaches the base station, being quantized and delayed, causes a CSI mismatch between the transmitter and the receiver. The mismatch translates into packet outage when the transmitted rate exceeds the instantaneous mutual information of the channel \cite{LauV, Aggarwal, Akoum2010}. When the interference is not coordinated, the transmitter is oblivious to the instantaneous interference at the receiver, and modulates its information at a rate corresponding to the estimated signal to interference ratio ($\mathsf{SIR}$) rather than the actual $\mathsf{SIR}$, hence increasing the probability of outage. 

We consider a narrowband channel model with a Gauss Markov temporal correlation and a distance dependent path loss model. Since we compute the average rates as a function of the distance and we average over all locations inside the macrocell, we neglect large scale fading. The base stations have multiple antennas while the subscribed users have a single antenna. Extending the results to multiple receive antennas and using precoded spatial multiplexing is a subject of future investigation. We assume that the femtocells are distributed according to a homogeneous Poisson point process (PPP). We derive an expression for the probability of successful reception of limited feedback beamforming as a function of feedback delay, quantization size, the distribution and the density of the femtocell interferers in the network. The general mathematical framework is based on properties of the Chi-squared distributions and the homogeneous PPP. We use the probability of successful reception to compute a closed form expression for the maximum number of femtocells that can be deployed inside the macrocell as a function of the distance between the mobile user and the macro base station. 

We show that limited feedback beamforming outperforms no beamforming, in terms of average achievable rate. Thus we conclude that limited feedback beamforming is viable for transmission on the downlink of a femtocell network. To evaluate the achievable rate, we use the goodput metric \cite{LauV, Aggarwal, Akoum2010}. Goodput is defined as the number of bits successfully transmitted to the receiver per unit of time. It takes into account the packet outage caused by the CSI mismatch between the transmitter and the receiver. When the probability of outage goes to zero, it is equal to the throughput. To combat the effect of packet outage, we implement a rate adaptation strategy based on rate backoff. To maximize the average goodput achieved, the transmitter, instead of transmitting at a rate corresponding to the estimated $\mathsf{SIR}$, backs off the estimated $\mathsf{SIR}$ by a multiplicative factor, $\beta\mathsf{SIR}$, and modulates its signal based on the backoff  $\mathsf{SIR}$.  The backoff factor $\beta$ is chosen to maximize the average achievable goodput. Rate backoff is a suboptimal adaptive transmission strategy used in \cite{Vakili2006, Zheng2008a}, to overcome channel estimation errors,  in the context of centralized and distributed opportunistic scheduling for single input single output channels. In our analysis, we derive the achievable rate with rate backoff when the feedback delay is the main performance bottleneck, as well as when the mobile user is subject to cross tier interference. We show that rate backoff recovers the throughput scaling with $\mathsf{SNR}$, and increases the average achievable rate, without requiring any coordination between the macrocell and the femtocells transmissions.

\subsection{Organization} The rest of the paper is organized as follows.  In Section \ref{sec:sysModel},
we describe the limited feedback cellular network considered, and we review the assumptions used throughout the paper.  Section \ref{sec:SysGoodput} introduces the
 system goodput as a function of the cross tier interference and the feedback delay. In Section \ref{sec:ProbSuccess}, we derive a lower bound on the probability of successful reception at the mobile user as a function of the femtocell density, the distance from the base station and the feedback delay. Section \ref{sec:Backoff} presents the rate backoff strategy suggested and derives the optimal backoff factor to mitigate the effect of  interference and maximize the achievable goodput.  Section
 \ref{sec:Simul} presents numerical results that illustrate the performance of the limited feedback system and present the achievable sum rate using rate backoff at the transmitter. This is followed by concluding remarks in Section \ref{sec:Conclusion}.

\subsection{Notation} Bold lowercase letters $\mathbf{a}$ are used to denote column vectors, bold uppercase letters $\mathbf{A}$ are used to denote matrices, non bold letters $a$ are used to denote scalar values, and caligraphic letters $\mathcal{A}$ are used to denote sets or functions of sets. Using this notion, $|a|$ is the magnitude of a scalar, $\|\mathbf{a}\|$ is the vector 2-norm, $\mathbf{A}^*$ is the conjugate transpose, $\mathbf{A}^T$ is the matrix transpose, $[\mathbf{A}]_{lm}$ is the scalar entry of $\mathbf{A}$ in the $\ell^{th}$ row and the $k^{th}$ column. We use $\mathbb{E}$ to denote expectation, $\mathrm{a}^{\mathrm{t}}$ to denote the metric $\mathrm{a}$ evaluated at the transmitter, and $\bar{a}$ to denote the average of a. 
\section{System Model and Assumptions}\label{sec:sysModel}
Consider a central macrocell $C_0$ overlayed with a tier of uncoordinated femtocells. The heterogeneous network considered is depicted in Figure~\ref{fig:cellsysfemto}. The macro base station $B_0$ is located at the center of the macrocell, and is equipped with $N_b$ antennas. Each femtocell is equipped with $N_f$ antennas. The user terminals in the macrocell and the femtocells have a single receive antenna.  We assume that each base station serves one active user at a time, using intra-cell orthogonal access. Both the macrocell and the femtocells employ a limited feedback beamforming system \cite{Love03, Jindal2006}. The user terminal estimates the downlink CSI sequence using pilot symbols sent by its serving base station. We assume that the receiver estimates the channel perfectly. The CSI is quantized by means of a codebook and the quantization index is sent to the base station via a limited feedback channel. Since the macro base station and the femtocells are not synchronized and are assumed not to coordinate their transmissions, the quantization index at the user terminal only takes into account the downlink channel from its base station. The feedback channel is assumed to be error free, with a fixed feedback delay $\rmd$. 

We use a stochastic geometry framework to model the distribution of the femtocells. The femtocells are assumed to be distributed according to a homogeneous spatial Poisson point process  $\Pi_f$ with a fixed average density of $\lambda_f$ femtocells per meters squared. The average number of actively transmitting femtocells is hence $No_f = \lambda_f C$, where $C = \pi R_c^2$ is the area of the macrocell, $R_c$ is the radius of the macrocell. The PPP model, in addition to being analytically tractable, is justifiable by arguing that the femtocell transmissions are uncoordinated. It is the worst case among all stationary point processes due to its maximum entropy property. We assume that cross tier interference is the main source of interference. We treat the aggregate co-channel interference from neighboring macro base stations as noise, in the interest of evaluating the effect of the interference from the femtocells on the downlink of the macrocell. We focus on the transmission on the downlink of the macrocell to evaluate the feasibility of limited feedback beamforming in this scenario. We assume that the femtocell user is well shielded from interference on the downlink because of the indoor penetration loss, and the high received signal strength from its serving femtocell, as compared to the macrocell and the adjacent femtocells. 

The received signal at the cellular mobile user $M_0$ is
\begin{eqnarray}
 y_0[n] &=& \sqrt{\rho_m}\;\rmD^{\frac{-\alpha_m}{2}}\bh^{*}_0[n]\bff_0[n-\rmd]s_0[n] + \sqrt{\rho_f}\sum_{F_i \in \Pi_f}{|\rmD_i|^{\frac{-\alpha_{f}}{2}} \bg^{*}_i[n]\bw_i[n]r_i[n]} + v_0[n],
\end{eqnarray}
where $\bh_0[n] \in \bbC^{N_b \times 1}$ is the downlink channel between $M_0$ and $B_0$, with zero mean, unit variance, i.i.d complex Gaussian entries.  The vector $\bg_i[n] \in \bbC^{N_f \times 1}$ is the downlink interference channel from the femtocell $F_i$ to $M_0$. The vector $\bff_0[n-\rmd] \in \bbC^{N_b\times 1}$ is the transmit beamforming vector at $B_0$ for $M_0$.  The vector $\bw_i[n] \in \bbC^{N_f \times 1}$ is the transmit beamforming vector used by femtocell $F_i$ to maximize the received power at its user terminal, for the transmitted signal $r_i$. The signals $s_0$ and $r_i$ are such 
that $\bbE[|s_0|^2] \leq P_t$, $\bbE[|r_i|^2] \leq P_{t_i}$, where $P_t$  and $P_{t_i}$ are the maximum transmit power at the base station and the femtocells, respectively.  $\rho_m$ and $\rho_{f}$ are the effect at the receiver of the transmit power, the carrier frequency and the wall partition loss, from the pathloss model. $\rmD$ is the distance between $B_0$ and its mobile user $M_0$, and $\rmD_i$ is the distance between $F_i$ and $M_0$. $\alpha_m > 2$ and $\alpha_f > 2$ denote, respectively, the pathloss exponent of the outdoor channel between $B_0$ and $M_0$, and the indoor to outdoor channel between $F_i$ and $M_0$. $v_0[n] \sim \cC\cN(0,N_0)$ is the additive white Gaussian noise  at $M_0$ with variance $N_0$.  

\section{The Goodput Model with Limited Feedback}\label{sec:SysGoodput}
In what follows, we derive the average goodput of limited feedback beamforming on the downlink of the macrocell network to study the viability of limited feedback in the heterogeneous network scenario. We then propose a rate backoff strategy to maximize the average achievable rate of the system.

The macro cell user terminal $M_0$ quantizes its channel ${\bh_0[n]}$ by means of a unit norm vector codebook $\cF = \{\bff_1, \bff_2,\cdots, \bff_N\}$, where $N = 2^\rmB$ is the size of the codebook. The quantizer function $\cQ$ chooses the beamforming vector $\bff_\ell$ that maximizes the $\mathsf{SNR}$ at $M_0$ such that
\begin{equation}
\bff_\ell[n] = \cQ\{{\bh}_0[n]\} = \argmax_{\bff_k \in \cF}{|{\bh}^*_0[n]\bff_k|^2}, \quad 1 \leq k \leq N.
\end{equation}
\normalsize
The channel $\bh_0[n]$ is then mapped to the index $I_n = \ell$ which is fed back to $B_0$ using B bits. 

The quantization process at $M_0$ does not take into account the cross tier interference from the femtocells, as the macro base station and the femtocells are not synchronized, due to the unplanned deployment of the femtocells. 
$M_0$ has no knowledge of the effective channels $\bg_i\bw_i$ from $F_i \in \Pi_f$.  
To modulate its signals, the base station estimates the signal to interference noise ratio ($\mathsf{SINR}$) based on the delayed and quantized CSI as well as the statistical knowledge of the interference. This estimated $\mathsf{SINR}$ might be different from the actual $\mathsf{SINR}$ estimated at the receiver. When the transmitted rate exceeds the supported rate of the channel, a rate outage or a packet outage occurs. To account for the rate outage, we use the system goodput as the performance metric. The goodput gives the amount of information successfully received (without retransmission) at the user terminal. It provides a model to compute the effect of outage on the achievable rate, and permits the design of transmission techniques to combat the outage and consequently decrease the number of retransmissions needed. Goodput is defined as
\begin{equation}
\Lambda[n] = \rmR^\rmt[n - \rmd] \cI\left(\rmR^\rmt[n - \rmd] \leq \rmR[n] \right),
\end{equation}
where $\cI(A)$ is the indicator function, which evaluates to 1 if the event $A$ is true, and $0$ otherwise. 

While hybrid automatic repeat request (HARQ) has been recently used in cellular systems to combat the CSI mismatch problem, we assume in this paper that HARQ is not present. In systems employing HARQ, the signals received in error are retransmitted to maximize the probability of successful reception at the user terminal. This increases the packet delay and increases the receiver complexity. Moreover, the maximum number of HARQ retransmissions is fixed, and the long term goodput achieved with HARQ depends on the number of successfully received packets after the retransmissions \cite{Wu2010, Chen2010}. 

The rate supported by the channel $\rmR[n, \rmd]$, for $M_0$, assuming Gaussian distributed transmit symbols $s_0[n]$, is written as a function of the signal to interference ratio $\mathsf{SIR}[n,\rmd]$, $$\rmR[n,\rmd] = \log_2\left(1 + \mathsf{SIR}[n,\rmd] \right),$$ where 
$\mathsf{SIR}[n,\rmd] = \frac{\rho_m\;\rmD^{-\alpha_m}|\bh^{*}_0[n]\bff_0[n-\rmd]|^2}{\rho_f\;\sum_{F_i\in\Pi_f}{|\rmD_i|^{-\alpha_{f}}|\bg^{*}_i[n]\bw_i[n]|^2}} 
=\frac{|\bh^{*}_0[n]\bff_0[n-\rmd]|^2}{\rmQ_{\rmD}I_f}.$
\normalsize

We assume that the system is interference limited, and that the received signal power is much higher than the noise power.
We thus ignore the thermal noise in the computations in this paper and mainly focus on the effect of cross tier interference on the achievable rate \cite{Chandrasekhar2009}. Furthermore, the expressions using $\mathsf{SIR}[n,\rmd]$ are more tractable analytically \cite{Hunter2008}. 

The interference function $I_f = \sum_{F_i\in\Pi_f}{|\rmD_i|^{-\alpha_{f}}|\bg^{*}_i[n]\bw_i[n]|^2}$
 is a shot noise process. The shot noise model is considered an efficient statistical model for predicting the fluctuations of the interference field, in the same way as the Rayleigh fading model is widely used to predict the fluctuations in the multipath channel,  \cite{Baccelli2009}. The pathloss ratio $\rmQ_\rmD = \frac{\rho_f}{\rho_m\rmD^{-\alpha_m}}$ is a function of the distance $\rmD$, and the ratio of  $\rho_f$ and $\rho_m$.  The transmitted rate is written in terms of  $\mathsf{SIR}^\sft[n,\rmd]$, the estimated $\mathsf{SIR}$ at $B_0$ as, 
\begin{equation}
\rmR^\rmt[n- \rmd]  =  \log_2\left(1 + \mathsf{SIR}^\sft[n,\rmd] \right),
\end{equation}
\normalsize
where 
$\mathsf{SIR}^\sft[n,\rmd] = \bar{\rho}|\bh^*[n - \rmd]\bff_0[n - \rmd]|^2$.

We assume that the transmitter has an estimate of the average signal to interference power ratio $\qquad\qquad\qquad \bar{\rho} = \frac{1}{\bbE\left[ \frac{\rho_f}{\rho_m\rmD^{-\alpha_m}}I_f\right]}= \frac{1}{\bbE\left[Q_\rmD I_f\right]}$.

The average goodput is expressed in terms of $\rmR^\sft[n-\rmd]$ and $\rmR[n,\rmd]$ as
\begin{eqnarray}\label{eqn:ergodic_goodput}
\nonumber\bar{\Lambda} &=& \bbE\left[ \rmR^\rmt[n-\rmd]\cI\left(\rmR^\sft[n-\rmd] \leq \rmR[n,\rmd]\right)\right]
\nonumber= \bbE\left[ \rmR^\rmt[n-\rmd]\bbP\left[\rmR^\sft[n-\rmd] \leq \rmR[n,\rmd] \;\large|\; \rmR^\sft[n-\rmd]  \right]\right]\\
&=& \int_{0}^{\infty}{\rmR^\rmt(\Upsilon) \bbP\left[\mathsf{SIR} \geq \Upsilon\;\large|\; \Upsilon \right] f_\Upsilon(\Upsilon)d\Upsilon}.
\end{eqnarray}
The integration is taken over $\mathsf{SIR}^\sft[n,\rmd]$, and $f_\Upsilon(\Upsilon)$ denotes the probability density function (pdf) of $\mathsf{SIR}^\sft[n,\rmd]$.  To evaluate $\bar{\Lambda}$, we find a closed form expression for the probability of successful reception $\bbP\left[\mathrm{SIR} \geq \Upsilon\right]$, as a function of the delay, the femtocell density and the distance between $M_0$ and $B_0$.

\section{The Probability of Successful Reception}\label{sec:ProbSuccess}
The probability of successful reception is evaluated as the complementary cumulative distribution function (CCDF) of the desired channel power $|\bh^{*}_0[n]\bff_0[n-\rmd]|^2$ given the interference shot noise $I_{f}$
 \small
\begin{eqnarray}
\bbP\left[\mathsf{SIR}[n,\rmd] \geq \Upsilon \right] &=& \bbP\left[\frac{|\bh^{*}_0[n]\bff_0[n-\rmd]|^2}{\rmQ_{\rmD}\displaystyle\sum_{F_i\in\Pi_f}{|\rmD_i|^{-\alpha_{f}}|\bg^{*}_i[n]\bw_i[n]|^2}} \geq \Upsilon \right]
= \bbP\left[\left|\bh^{*}_0[n]\bff_0[n-\rmd]\right|^2 \geq \rmQ_{\rmD}I_{f}\Upsilon\right],
\end{eqnarray}
\normalsize
where $\Upsilon$ denotes the $\mathsf{SIR}$ threshold.\\
We use a Gauss Markov autoregressive model, shown in the literature to be reasonably accurate for small delays on the communications links \cite{FSMC1, turin2004},  to account for the temporal correlation in the system. The effective channel at the receiver, assuming a Gauss Markov model \cite{KobayashiDelay}, is expressed as a weighted sum of two effective channels, 
\begin{equation}\label{eqn:GaussMarkov}
\bh_0^*[n]\bff_0[n-\rmd] = \left(\eta\bh_0^*[n - \rmd] + \sqrt{1 - \eta^2}\bee^*[n] \right) \bff_0[n-\rmd], 
\end{equation}
where $\bee[n]$ is a Gaussian vector with $\cC\cN(0,1)$ entries, independent of $\bh_0$. The correlation coefficient $\eta$ is determined using Clarkes isotropic scattering model as $\eta =  \cJ_0(2\pi \rmd f_dT_s)$, where $f_d$ is the maximum Doppler spread and $T_s$ is the symbol duration. The Doppler spread is  $f_d = vf_c/c$, where $v$ is the relative velocity of the transmitter-receiver pair, $f_c$ is the carrier frequency, and $c$ is the speed of light.
\begin{Lemm}\label{lem:approx_1}
The desired instantaneous channel power is written as a function of the first term of the Gauss Markov model $\bh_0[n-\rmd]$, for large values of $\eta$ corresponding to low mobility 
\begin{eqnarray}\label{eqn:approx_1}
\left|\bh_0^*[n]\bff_0[n-\rmd]\right|^2 &\approx& \eta^2 \left|\bh_0^*[n-\rmd]\bff_0[n-\rmd]\right|^2.
\end{eqnarray}
\end{Lemm}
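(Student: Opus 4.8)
The plan is to substitute the Gauss--Markov decomposition \eqref{eqn:GaussMarkov} into the left-hand side of \eqref{eqn:approx_1} and then show that only the term $\eta^2\left|\bh_0^*[n-\rmd]\bff_0[n-\rmd]\right|^2$ survives in the low-mobility regime. Abbreviating $a \bydef \bh_0^*[n-\rmd]\bff_0[n-\rmd]$ and $b \bydef \bee^*[n]\bff_0[n-\rmd]$, equation \eqref{eqn:GaussMarkov} reads $\bh_0^*[n]\bff_0[n-\rmd] = \eta a + \sqrt{1-\eta^2}\,b$, so that
\begin{equation}\label{eqn:expand_approx}
\left|\bh_0^*[n]\bff_0[n-\rmd]\right|^2 = \eta^2 |a|^2 + (1-\eta^2)|b|^2 + 2\eta\sqrt{1-\eta^2}\,\mathrm{Re}\{a^{*} b\},
\end{equation}
and the content of the lemma is that the last two terms are negligible against the first when $\eta$ is close to $1$.

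The structural fact I would lean on is that $\bff_0[n-\rmd]=\cQ\{\bh_0[n-\rmd]\}$ is a deterministic function of $\bh_0[n-\rmd]$, whereas $\bee[n]$ is independent of $\bh_0$ with $\cC\cN(0,1)$ entries; hence $\bee[n]$ is independent of $\bff_0[n-\rmd]$, and conditioned on the unit-norm vector $\bff_0[n-\rmd]$ the projection $b=\bee^*[n]\bff_0[n-\rmd]$ is $\cC\cN(0,1)$. This yields $\bbE\left[|b|^2\right]=1$ and $\bbE\left[b\mid a,\bff_0[n-\rmd]\right]=0$, so the cross term in \eqref{eqn:expand_approx} has zero conditional mean. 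I would then bound the expected absolute error committed by \eqref{eqn:approx_1}: by the triangle inequality and Cauchy--Schwarz it is at most $(1-\eta^2)\,\bbE\left[|b|^2\right] + 2\eta\sqrt{1-\eta^2}\,\sqrt{\bbE\left[|a|^2\right]}\,\sqrt{\bbE\left[|b|^2\right]}$, where $\bbE\left[|b|^2\right]=1$ and $\bbE\left[|a|^2\right]\le \bbE\left[\|\bh_0[n-\rmd]\|^2\right]=N_b$ (again by Cauchy--Schwarz, since $\|\bff_0[n-\rmd]\|=1$). Hence the expected error is $O\!\left(\sqrt{1-\eta^2}\right)$, which vanishes as $\eta=\cJ_0(2\pi\rmd f_d T_s)\to 1$, i.e., as $\rmd f_d T_s\to 0$; dividing through by $\eta^2|a|^2$, which stays bounded away from $0$, shows the relative error is of the same small order, making the approximation \eqref{eqn:approx_1} legitimate in this regime.

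The main obstacle is not a hard computation but pinning down the precise sense of the approximation, since \eqref{eqn:approx_1} is not an exact identity; I would state it as an $L^1$ / small-relative-error estimate valid for $\eta$ near $1$ as above, and note that this is all that is needed for the sequel, because \eqref{eqn:approx_1} is only used inside $\bbP\left[\mathsf{SIR}\geq\Upsilon\right]$, whose value is perturbed by at most $O\!\left(\sqrt{1-\eta^2}\right)$. A secondary point that must not be glossed over is the independence bookkeeping: the argument uses crucially that the quantizer is applied to the \emph{delayed} channel $\bh_0[n-\rmd]$, so that $\bff_0[n-\rmd]$ is independent of the innovation $\bee[n]$; if $\bff_0$ were instead a function of $\bh_0[n]$ this step would fail and the cross term could no longer be discarded.
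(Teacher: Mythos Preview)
Your argument is sound: the expansion \eqref{eqn:expand_approx} is correct, the independence reasoning (that $\bff_0[n-\rmd]=\cQ\{\bh_0[n-\rmd]\}$ is measurable with respect to $\bh_0[n-\rmd]$ while $\bee[n]$ is independent of it, so $b\mid\bff_0[n-\rmd]\sim\cC\cN(0,1)$) is exactly what is needed, and the $L^1$ bound of order $\sqrt{1-\eta^2}$ on the discarded terms follows from Cauchy--Schwarz as you say. The only loose phrase is ``$\eta^2|a|^2$ stays bounded away from $0$'': as a random variable $|a|^2$ is not bounded below, so the pointwise relative error statement should be softened to a statement in probability or replaced by the $L^1$ bound you already have, which is what the downstream use in $\bbP[\mathsf{SIR}\geq\Upsilon]$ actually requires.

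The paper, by contrast, does not supply an analytical proof of this lemma at all. It simply states \eqref{eqn:approx_1} and then justifies it empirically: Figure~\ref{fig:verifyApprox} overlays the empirical CDF of $|\bh_0^*[n]\bff_0[n-\rmd]|^2$ against that of $\eta^2|\bh_0^*[n-\rmd]\bff_0[n-\rmd]|^2$ for several mobile velocities and observes that the two curves track each other up to moderate speeds. So your route is genuinely different and strictly more informative: you give a quantitative error estimate $O(\sqrt{1-\eta^2})$ together with the structural reason (zero-mean innovation projected on a unit vector independent of it), whereas the paper offers only numerical corroboration. What the paper's approach buys is a direct check at the distributional level used later (the CDF), including the effect of the quantizer gain in $|a|^2$, without any inequalities; what your approach buys is an explicit rate in $\eta$ and a clear delineation of the independence hypothesis that makes the approximation work.
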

\begin{Lemm}\label{lem:quant_apprx}
The desired effective channel power  $\left|\bh_0^*[n]\bff_0[n-\rmd]\right|^2 $ can be expressed as the weighted sum of two independent random variables, $Z = \eta^2\left(\left(1-\delta\right)X + Y\right)$, where $X$ and $Y$ are distributed as $\chi_{2(N_b - 1)}(1)$ and  $\chi_{2}(1)$, respectively. The CDF of the effective channel power is given by
\small
\begin{eqnarray}
 F_{Z}(z) &=& 1 - c_2\exp{\left(-\frac{z}{\kappa_2}\right)} + c_1\exp{\left(-\frac{z}{\kappa_1}\right)} \displaystyle\sum_{i = 0}^{N_b -2}{\sum_{\ell =0 }^{i}{\frac{\delta^i}{(i - \ell)!}\left(\frac{z}{\kappa_1}\right)^{i - \ell}}}, 
%
%
\end{eqnarray}
\normalsize
where $c_1 = \left(1 - \delta\right)\left(\frac{1}{\delta}\right)^{N_b-1}$, $\kappa_1 = 2\eta^2\left(1 - \delta \right)$, $c_2 =\left( \frac{1}{\delta}\right)^{N_b - 1}$, and $\kappa_2 = 2\eta^2$. 
\end{Lemm}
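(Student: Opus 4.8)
The plan is to reduce the statement to computing the distribution function of a sum of two independent scaled chi‑square variables, the reduction being provided by Lemma~\ref{lem:approx_1} together with the standard orthogonal decomposition of a quantized beamforming gain. By Lemma~\ref{lem:approx_1}, $\left|\bh_0^*[n]\bff_0[n-\rmd]\right|^2 \approx \eta^2\,W$ with $W := \left|\bh_0^*[n-\rmd]\bff_0[n-\rmd]\right|^2$, so it is enough to exhibit $W = (1-\delta)X + Y$ with $X \sim \chi_{2(N_b-1)}(1)$ and $Y \sim \chi_2(1)$ independent; multiplying by $\eta^2$ then merely rescales the two components, turning their scale parameters into $\kappa_1 = 2\eta^2(1-\delta)$ and $\kappa_2 = 2\eta^2$ as announced.

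\emph{Decomposing the quantized gain.} I would write $W = \|\bh_0[n-\rmd]\|^2\big(1 - \sin^2\theta\big)$, where $\sin^2\theta = 1 - \left|\bh_0^*[n-\rmd]\bff_0[n-\rmd]\right|^2 / \|\bh_0[n-\rmd]\|^2$ is the beamforming quantization error, and then invoke the standard random‑vector‑quantization (quantization‑cell) model: the squared channel norm decomposes as $\|\bh_0[n-\rmd]\|^2 = X + Y$ with $X \sim \chi_{2(N_b-1)}(1)$, $Y \sim \chi_2(1)$ independent (the energy in the $N_b-1$ directions orthogonal to $\bff_0[n-\rmd]$ plus the energy along it), while the quantization error is modeled as $\sin^2\theta = \delta\, X/(X+Y)$ with $\delta = 2^{-B/(N_b-1)}$, i.e. $\delta$ times a $\mathrm{Beta}(N_b-1,1)$ variable that is independent of $\|\bh_0[n-\rmd]\|^2$ by the beta--gamma relation. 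Substituting gives $W = (X+Y) - \delta X = (1-\delta)X + Y$, and hence $Z = \eta^2\big((1-\delta)X + Y\big)$.

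\emph{CDF of the weighted sum.} Now $Z$ is the sum of the two independent variables $A := \eta^2(1-\delta)X$, which is $\mathrm{Gamma}(N_b-1,\kappa_1)$, and $B := \eta^2 Y$, which is exponential with mean $\kappa_2$. Conditioning on $A$, $F_Z(z) = \bbP[A \le z] - e^{-z/\kappa_2}\,\bbE\!\left[e^{A/\kappa_2}\,\cI(A \le z)\right]$, and both pieces are lower‑incomplete‑gamma integrals. Using the identity $\tfrac1{\kappa_1}-\tfrac1{\kappa_2} = \tfrac{\delta}{\kappa_1}$ to collapse the product of exponentials, and reindexing the resulting finite double sum via $m = i-\ell$, reproduces the stated closed form; the constants fall out of the partial‑fraction expansion of the moment generating function $(1-\kappa_1 t)^{-(N_b-1)}(1-\kappa_2 t)^{-1}$, whose residue at $t = 1/\kappa_2$ is $(1-\kappa_1/\kappa_2)^{-(N_b-1)} = (1/\delta)^{N_b-1} = c_2$, with the analogous computation giving $c_1 = (1-\delta)(1/\delta)^{N_b-1}$.

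\emph{Main obstacle.} The only delicate point is the decomposition step: for a \emph{fixed} beamformer the orthogonal‑complement energy is exactly $\chi_{2(N_b-1)}(1)$, whereas here $\bff_0[n-\rmd]$ is the \emph{quantizer} of $\bh_0[n-\rmd]$, so that energy is shrunken, and the identification $\sin^2\theta = \delta\,X/(X+Y)$ is the quantization‑cell approximation — this is exactly where the result is an approximation rather than an identity, consistent with the "$\approx$" in Lemma~\ref{lem:approx_1}. Once $W = (1-\delta)X + Y$ is granted, the remainder is a routine, if slightly tedious, incomplete‑gamma computation with no further conceptual content.
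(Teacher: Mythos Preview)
Your proposal is correct and follows essentially the same route as the paper: invoke Lemma~\ref{lem:approx_1} for the $\eta^2$ factor, use the quantization-cell (Gersho/RVQ) approximation to write $|\bh_0^*[n-\rmd]\bff_0[n-\rmd]|^2$ as a sum of independent scaled $\chi_{2(N_b-1)}$ and $\chi_2$ variables, and then compute the CDF of that weighted sum. The paper's proof is terser---it simply cites \cite{Yoo2007} for the decomposition and \cite{Simon2006} for the hypoexponential/generalized chi-square CDF---but your explicit conditioning/partial-fraction derivation reproduces exactly those cited results, and you correctly flag the quantization-cell step as the place where the statement is an approximation rather than an exact identity.
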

\begin{proof}
Using Gersho's conjecture on asymptotic quantization \cite{Mukkavilli2003, Zhou2005},  it was shown in \cite{Yoo2007} that $|\bh_0[n]|^2{\sin^2{\left(\angle{\left(\bh_0[n], \bv\right)}\right)}}$ has a Chi squared distribution $\chi_{2(N_b-1)}(\delta)$, and that $|\bh_0[n]|^2{\cos^2{\left(\angle{\left(\bh_0[n], \bv\right)}\right)}}$ can be expressed as the sum of two independent Chi squared random variables, $\chi_{2(N_b-1)}(1-\delta)$ and $\chi_{2}(1)$, respectively. Gersho's conjecture assumes that each quantization cell is a  Voronoi region of a spherical cap with a surface area equal $2^{-\rmB}$ of the total surface area of the unit sphere. This approximation has been shown to closely model the performance of codebook design techniques such as random vector quantization \cite{Yoo2007}, and is  used in the literature to analyze the performance of limited feedback systems\cite{Yoo2007, Jindal2006, Zhang2009}. The distribution of the sum of two independent Chi squared random variables with $2(N_b - 1)$ and $2$ degrees of freedom, respectively, follows from \cite{Simon2006}. 
\end{proof}

Figure \ref{fig:verifyApprox} compares the empirical CDF of the effective received power of $\left|\bh_0^*[n]\bff_0[n-\rmd]\right|^2$, with that of $\eta^2|\bh_0^*[n-\rmd]\bff_0[n - \rmd]|^2$ obtained in Lemma 2,  for increasing velocities $v \in \{10, 20, 30, 40, 50\}$ km/h, for a $4\times 1$ multiple input single output (MISO) downlink system, with a codebook size of B = 6. We observe that the approximation tightly follows the empirical CDF, with tolerable error up to moderate mobile speed of 50 km/h. 

\begin{theorem}\label{th:Psuccess}
\normalsize The probability of successful reception at a mobile cellular user, in the presence of cross tier interference, for a limited feedback MISO system, over temporally correlated channels, is given by
\small
\begin{equation}\label{eqn:theorem_1}
\bbP\left[\mathrm{SIR} \geq \Upsilon \right]  = A_1\omega_1{(\Upsilon)}e^{-\omega_1(\Upsilon)} + A_2e^{-\omega_1(\Upsilon)} + c_2e^{-\omega_2(\Upsilon)},
\end{equation}
\normalsize
where \small$$A_1 = c_1\sum_{i = 1}^{N_b - 2}{\displaystyle\sum_{\ell = 0}^{i - 1}{\frac{\delta^i}{(i-\ell)!}(-1)^{i - \ell} \prod_{m = 0}^{i - \ell - 1}{\left(\frac{2}{\alpha_f} - m \right)}}},\quad \omega_1(\Upsilon) = \lambda_fC_f\left(\frac{\Upsilon\rmQ_\rmD}{\kappa_1}\right)^{2/\alpha_f},$$ 
\begin{equation}\label{eqn:constants}
A_2 = -c_1\displaystyle\sum_{i = 0}^{N_b - 2}{\delta^i},\quad \omega_2(\Upsilon) = \lambda_fC_f\left(\frac{\Upsilon\rmQ_\rmD}{\kappa_2} \right)^{2/\alpha_f}, \mbox{ and    } C_f = \frac{2\pi}{\alpha_f}\Gamma\left(\frac{2}{\alpha_f}\right)\Gamma\left(1 - \frac{2}{\alpha_f}\right).\end{equation}\normalsize
\end{theorem}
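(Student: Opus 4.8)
The plan is to condition on the femtocell point process, apply Lemma~\ref{lem:quant_apprx} for the conditional law of the desired channel power, and then average over the Poisson interference field through its Laplace transform. Write $Z:=|\bh_0^*[n]\bff_0[n-\rmd]|^2$; since $Z$ is a functional of the desired channel process of $M_0$ only, it is independent of $\Pi_f$ and of the interference channels $\{\bg_i,\bw_i\}$, hence of $I_f$. As $\{\mathsf{SIR}\ge\Upsilon\}=\{Z\ge\rmQ_\rmD\Upsilon I_f\}$, the tower property gives
\begin{equation}
\bbP\left[\mathsf{SIR}\ge\Upsilon\right]=\bbE_{\Pi_f}\!\left[1-F_Z\!\left(\rmQ_\rmD\Upsilon I_f\right)\right],
\end{equation}
with $F_Z$ the CDF of Lemma~\ref{lem:quant_apprx} (which already folds in the low-mobility reduction of Lemma~\ref{lem:approx_1} and Gersho's conjecture). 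Plugging in its closed form, $1-F_Z(z)$ is a finite linear combination of monomials $z^k e^{-z/\kappa}$ with $\kappa\in\{\kappa_1,\kappa_2\}$ and $0\le k\le N_b-2$ (the $e^{-z/\kappa_2}$ branch contributing only $k=0$), so the right-hand side reduces to a linear combination of the quantities $\bbE_{\Pi_f}\!\big[(\rmQ_\rmD\Upsilon I_f)^k e^{-\rmQ_\rmD\Upsilon I_f/\kappa}\big]$.

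Next I would evaluate these. The marks $G_i:=|\bg_i^*[n]\bw_i[n]|^2$ are i.i.d.\ unit-mean exponential, because each $\bw_i$ is designed for $F_i$'s own link and is independent of $\bg_i$. The probability generating functional of the homogeneous PPP, with $\bbE_G[e^{-tG}]=1/(1+t)$ and polar coordinates, then yields the standard interference Laplace transform $\cL_{I_f}(s)=\bbE[e^{-sI_f}]=\exp\!\big(-\pi\lambda_f\,\bbE[G^{2/\alpha_f}]\,\Gamma(1-\tfrac{2}{\alpha_f})\,s^{2/\alpha_f}\big)=\exp\!\big(-\lambda_f C_f s^{2/\alpha_f}\big)$, where $\bbE[G^{2/\alpha_f}]=\Gamma(1+\tfrac{2}{\alpha_f})=\tfrac{2}{\alpha_f}\Gamma(\tfrac{2}{\alpha_f})$ gives $C_f=\tfrac{2\pi}{\alpha_f}\Gamma(\tfrac{2}{\alpha_f})\Gamma(1-\tfrac{2}{\alpha_f})$. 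The moment-weighted transforms follow from $\bbE[I_f^k e^{-sI_f}]=(-1)^k\cL_{I_f}^{(k)}(s)$; writing $\cL_{I_f}(s)=e^{-w(s)}$ with $w(s)=\lambda_f C_f s^{2/\alpha_f}$, repeated differentiation shows that $s^k\bbE[I_f^k e^{-sI_f}]$ equals $e^{-w(s)}$ times a degree-$k$ polynomial in $w(s)$ with vanishing constant term, whose linear coefficient is $(-1)^{k+1}\prod_{m=0}^{k-1}\!\big(\tfrac{2}{\alpha_f}-m\big)$ --- precisely $\tfrac{d^k}{ds^k}s^{2/\alpha_f}$ rescaled.

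Substituting $s=\rmQ_\rmD\Upsilon/\kappa_1$ (so $w=\omega_1(\Upsilon)$) on the $e^{-z/\kappa_1}$ branch and $s=\rmQ_\rmD\Upsilon/\kappa_2$ (so $w=\omega_2(\Upsilon)$) on the other, and collecting: the $c_2 e^{-z/\kappa_2}$ term gives $c_2 e^{-\omega_2(\Upsilon)}$; the $k=0$ part of the $c_1$-branch --- the terms with $\ell=i$ in Lemma~\ref{lem:quant_apprx} --- sums to $-c_1\sum_{i=0}^{N_b-2}\delta^i e^{-\omega_1(\Upsilon)}=A_2 e^{-\omega_1(\Upsilon)}$; and, retaining for $k\ge1$ the contribution linear in $\omega_1$ (equivalently, leading order in $\lambda_f$) and summing $-c_1\frac{\delta^i}{(i-\ell)!}$ against it over $0\le\ell\le i-1$, $1\le i\le N_b-2$, reproduces $A_1\omega_1(\Upsilon)e^{-\omega_1(\Upsilon)}$. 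This yields \eqref{eqn:theorem_1}; the higher-order-in-$\lambda_f$ terms set aside here, together with the approximations of Lemmas~\ref{lem:approx_1}--\ref{lem:quant_apprx}, account for the gap from the exact success probability.

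The main obstacle is the combinatorial bookkeeping in this last step: one must verify that the double sum inherited from $F_Z$, once composed with the derivatives of $\cL_{I_f}$, collapses to exactly the two constants $A_1$ and $A_2$, keeping the signs $(-1)^{i-\ell}$, the factorials $(i-\ell)!$, the index ranges, and the falling-factorial products $\prod_{m=0}^{i-\ell-1}(\tfrac{2}{\alpha_f}-m)$ all consistent (the last being exactly where $\tfrac{d^{i-\ell}}{ds^{i-\ell}}s^{2/\alpha_f}$ enters). By contrast, deriving $\cL_{I_f}$ and the constant $C_f$ is routine once $G_i$ is recognized as exponential, and the conditioning step is immediate.
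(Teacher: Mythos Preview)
Your proposal is correct and follows essentially the same route as the paper's proof: condition on the interference, substitute the CCDF of $Z$ from Lemma~\ref{lem:quant_apprx}, separate the $\ell=i$ terms (which yield the Laplace transform directly, producing $A_2e^{-\omega_1}$ and $c_2e^{-\omega_2}$) from the $\ell<i$ terms (handled via $\bbE[I_f^k e^{-sI_f}]=(-1)^k\cL_{I_f}^{(k)}(s)$), and retain only the leading-order contribution in $\lambda_f$ when differentiating $\cL_{I_f}$. The paper cites the Laplace transform $\cL_{I_f}(s)=\exp(-\lambda_f C_f s^{\delta_f})$ from the literature rather than deriving it via the PGFL as you do, and it likewise drops the $\Theta(\lambda_f^2 C_f^2\theta^{2\delta_f})$ terms you call ``higher-order-in-$\lambda_f$''; your explicit acknowledgment that this truncation is an approximation is, if anything, more transparent than the paper's treatment.
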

\begin{proof}
The proof is provided in Appendix \ref{app:app_1}.
\end{proof}

Using Theorem 1,  we obtain an estimate of the maximum contention density $\lambda_f$ of femtocells in the network, given a probability of outage constraint $\epsilon$. 
\begin{Coroll}
The maximum femtocell contention density $\lambda_f(\rmD)$ for which the probability of successful reception satisfies the maximum outage probability constraint $\bbP\left[\mathsf{SIR} \geq \Upsilon \right] \geq 1 - \epsilon$, given $\Upsilon$, is the solution to 
\begin{equation}\label{eqn:density}
A_1\omega_1{(\lambda_f)}e^{-\omega_1(\lambda_f)} + A_2e^{-\omega_1(\lambda_f)} + c_2e^{-\omega_2(\lambda_f)} \geq 1 - \epsilon.
\end{equation}
In the high resolution regime, for small values of $\delta$ corresponding to a large codebook size $2^\rmB$, $\lambda$ is  given by
\small
\begin{equation} \label{eqn:lambda_f}
\lambda_f \leq \frac{\mathrm{LambertW}\left(-\frac{1-\epsilon}{A_1\exp(\frac{A_2 + c_2}{A_1})}\right) + \frac{A_2}{A_1} + \frac{c_2}{A_1}}{-C_f\left(\frac{\Upsilon\rmQ_\rmD}{\kappa_1}\right)^{\delta_f}},
\end{equation}
\normalsize
where LambertW denotes the Lambert W function that solves the equation  $W\exp(W) = X$ for $W$ as a function of $X$.
\end{Coroll}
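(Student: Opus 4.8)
The plan is to obtain the first assertion as an immediate corollary of Theorem~\ref{th:Psuccess} and then to solve the resulting transcendental boundary equation in closed form in the high-resolution limit. For the first part, I would observe that in the expression of Theorem~\ref{th:Psuccess} the density enters only through $\omega_1 = \lambda_f C_f(\Upsilon\rmQ_\rmD/\kappa_1)^{2/\alpha_f}$ and $\omega_2 = \lambda_f C_f(\Upsilon\rmQ_\rmD/\kappa_2)^{2/\alpha_f}$, both proportional to $\lambda_f$, so that $g(\lambda_f) := A_1\omega_1 e^{-\omega_1} + A_2 e^{-\omega_1} + c_2 e^{-\omega_2}$ is precisely $\bbP[\mathsf{SIR}\ge\Upsilon]$ regarded as a function of $\lambda_f$ for fixed $\Upsilon$ and $\rmD$. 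Plugging in the definitions of $c_1,c_2$ and using the telescoping identity $(1-\delta)\sum_{i=0}^{N_b-2}\delta^i = 1-\delta^{N_b-1}$ gives $A_2 + c_2 = 1$, hence $g(0) = 1$ (consistent with the interference-free limit), while $g(\lambda_f)\to 0$ as $\lambda_f\to\infty$. Since $g$ is monotonically decreasing in $\lambda_f$, the outage constraint $\bbP[\mathsf{SIR}\ge\Upsilon]\ge 1-\epsilon$ is equivalent to $g(\lambda_f)\ge 1-\epsilon$, which is the displayed condition \eqref{eqn:density}; the largest admissible $\lambda_f$ is then the unique solution of $g(\lambda_f) = 1-\epsilon$.

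For the closed form I would pass to the high-resolution regime, where $\delta\to 0$ forces $\kappa_1 = 2\eta^2(1-\delta)\to 2\eta^2 = \kappa_2$, so the two decay rates merge: $\omega_1(\lambda_f)\approx\omega_2(\lambda_f) =: \omega = \lambda_f C_f(\Upsilon\rmQ_\rmD/\kappa_1)^{2/\alpha_f}$, while $A_1,A_2,c_2$ are retained exactly. The boundary equation $g(\lambda_f) = 1-\epsilon$ collapses to
\begin{equation}
\left(A_1\omega + A_2 + c_2\right)e^{-\omega} = 1-\epsilon .
\end{equation}
The decisive step is to recognize this as a Lambert-$W$ equation. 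Setting $w := -\bigl(\omega + \tfrac{A_2+c_2}{A_1}\bigr)$ gives $A_1\omega + A_2 + c_2 = -A_1 w$ and $e^{-\omega} = e^{w}\exp\!\bigl(\tfrac{A_2+c_2}{A_1}\bigr)$, so the equation becomes
\begin{equation}
w\,e^{w} = -\frac{1-\epsilon}{A_1\exp\!\bigl(\tfrac{A_2+c_2}{A_1}\bigr)} ,
\end{equation}
whose solution is $w = \mathrm{LambertW}\!\bigl(-\tfrac{1-\epsilon}{A_1\exp((A_2+c_2)/A_1)}\bigr)$. Unwinding the substitution, $\omega = -w - \tfrac{A_2+c_2}{A_1}$, and dividing by $C_f(\Upsilon\rmQ_\rmD/\kappa_1)^{2/\alpha_f}$ reproduces \eqref{eqn:lambda_f} with $\delta_f = 2/\alpha_f$; that $\lambda_f$ may be any value up to this bound follows from the monotonicity of $g$ established above.

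The part I expect to be most delicate is not the algebra but the supporting claims: one must actually verify that $g$ is decreasing on $[0,\infty)$ — which requires controlling the signs of $A_1$ and $A_2$ (here $A_2 = 1 - c_2 < 0$ since $c_2 > 1$) and checking that the polynomial-times-exponential term $A_1\omega_1 e^{-\omega_1}$ does not overturn the decay of the dominant exponentials — and one must confirm that the argument of $\mathrm{LambertW}$ lies in $[-1/e,0)$ so that a real principal-branch root exists; the latter should follow from feasibility of the constraint together with $g(0) = 1$. A brief justification of the merging $\kappa_1\approx\kappa_2$ (the same low-$\delta$ regime under which $c_1\approx c_2$) is also needed.
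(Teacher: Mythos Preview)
Your proposal is correct and follows essentially the same route as the paper: merge $\omega_1$ and $\omega_2$ via $\kappa_1\approx\kappa_2$ in the small-$\delta$ regime, then reduce $(A_1\omega+A_2+c_2)e^{-\omega}=1-\epsilon$ to Lambert-$W$ form by the substitution $C=-\omega-(A_2+c_2)/A_1$. The only notable difference is framing: the paper sets $\kappa_2\leftarrow\kappa_1=2\eta^2(1-\delta)$ and observes this yields a \emph{lower bound} on the success probability (hence a conservative density bound), whereas you treat the merging as an asymptotic approximation; your additional checks (monotonicity of $g$, the identity $A_2+c_2=1$, and the Lambert-$W$ domain) are extra rigor that the paper's terse proof omits.
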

\begin{proof}
The proof is provided in Appendix \ref{app:app_2}.
\end{proof}
Estimating the maximum number of femtocells as a function of the distance of the mobile user from the macro base station allows the network designer to put a limit on the number of femtocells to be deployed in each area. The number of femtocells increases exponentially with the number of antennas at the macro base station and the quantization size for the limited feedback beamforming system.
\section{Rate Backoff}\label{sec:Backoff}
When the estimated $\mathsf{SIR}^\sft$ at $B_0$ exceeds the estimated $\mathsf{SIR}$ at $M_0$, due to CSI mismatch, the transmitted rate exceeds the supported rate of the channel. This results in a packet outage. To decrease the probability of outage, and increase the probability of successful transmission, we implement rate backoff at the transmitter. Instead of transmitting at a rate corresponding to the estimated $\mathsf{SIR}^\sft$, the base station computes a backoff signal-to-interference ratio, $\mathsf{SIR}^\sfb\left(\mathsf{SIR}^\sft\right)$, which backs off the estimated $\mathsf{SIR}^\sft$ by a multiplicative factor $\beta$ 
\begin{equation}
\mathsf{SIR}^\sfb = \beta\;\mathsf{SIR}^\sft.
\end{equation}
\normalsize
The backoff $\mathsf{SIR}^\sft$ is chosen such that the average achievable goodput is maximized. Under the backoff model, the average goodput is written as
\small
\begin{equation}
\bar{\Lambda}^{\sfb}= \bbE\Big[\log_2\left(1 + \mathsf{SIR}^{\sfb}\right)\mathbb{P}\big[\mathsf{SIR}^{\sfb} \leq \mathsf{SIR}\big]\Big].
\end{equation}
\normalsize
Setting the backoff factor, $\beta$, to a value close to 1 provides a goodput rate close to that achieved without rate adaptation. A smaller $\beta$ value, however, might be conservative and lead to a low average achievable rate. We are interested in computing the optimal value for $\beta \in \left[0\;1\right]$ such that $\bar{\Lambda}^{\sfb}$ is maximized. 
The average goodput $\bar{\Lambda}^b$ is a continuous and differentiable function in $\beta$. Thus there exists an optimal backoff factor $\beta^*$ such that 
\begin{equation}
\beta^* = \argmax_{\beta} {\bar{\Lambda}^b}.
\end{equation}
\normalsize
We first compute the optimal backoff factor due to feedback delay, without taking into consideration the femtocell interference. We then extend the analysis to include both delay and interference.


\subsection{Rate backoff with feedback delay}
In the absence of interference, the feedback delay is the only source of CSI mismatch at the transmitter. The rate supported by the channel is 
\small
\begin{eqnarray}
\rmR[n] &=& \log_2\left(1 + \mathsf{SIR} \right)=  \log_2\left(1 + \frac{|\bh^*_0[n]\bff_0[n-d]|^2}{\rmQ_\rmD } \right),
\end{eqnarray}
\normalsize
where, with no cross tier interference, the pathloss ratio is $\rmQ_\rmD = \frac{1}{\rho_m\rmD^{-\alpha_m}}$.\\
The estimated $\mathsf{SIR}^\sft$ at the transmitter is 
 $\mathsf{SIR}^\sft =\frac{|\bh_0^*[n-\rmd]\bff_0[n-\rmd]|^2}{\rmQ_\rmD}$.

In the absence of interference, the probability of successful reception at the receiver is given by the CCDF of the effective channel power
\small
\begin{eqnarray}
\nonumber  \bbP\left[\mathsf{SIR} \geq \Upsilon \right]  = 
 c_2\exp{\left(-\frac{z}{\kappa_2}\right)} - c_1\exp{\left(-\frac{z}{\kappa_1}\right)} \sum_{i = 0}^{N_b -2}{\sum_{\ell =0 }^{i}{\frac{\delta^i}{(i - \ell)!}\left(\frac{z}{\kappa_1}\right)^{i - \ell}}}.
\end{eqnarray}
\normalsize
To maximize the average goodput  with the suboptimal rate backoff function $\beta\Upsilon$, one needs to find a factor $\beta^*$ such that the integral on the right hand side of 
\small
\begin{eqnarray}\label{eqn:avggoodputbackoff}
\bar{\Lambda}^\rmb =  \int_{0}^{\infty}{\rmR^\rmt(\beta\Upsilon) \bbP\left(\mathsf{SIR} \geq \beta\Upsilon\right) f_\Upsilon(\Upsilon)d\Upsilon}
\end{eqnarray}
\normalsize
is maximized. In (\ref{eqn:avggoodputbackoff}), maximizing the argument of the integration results in maximizing the integration. We interchange the integration with the maximization, and the problem becomes finding $\beta^*$ such that
\begin{eqnarray}
\beta^* &=& \argmax_{\beta}\left[ \log_2\left(1 + \beta \Upsilon\right) \bbP\left(\mathsf{SIR} \geq \beta\Upsilon \right)\right].
\end{eqnarray}
\begin{theorem}
The backoff factor $\beta^*$ that maximizes the average goodput for a limited feedback beamforming MISO system, with delay on the feedback channel, is the non-trivial solution of 
\small
\begin{align}\label{eqn:theorem_2}
\nonumber&\log_2\left(1 + \frac{\beta\Upsilon}{\rmQ_\rmD}\right) \left[c_1e^{(k_1\beta)}\left(k_1(f_1(\beta) + \frac{1 - \delta^{(N_b - 1)}}{1 -\delta}) - f_2(\beta) \right)  - c_2k_2e^{(k_2\beta)}\right] =\\
&\left(\frac{1}{1 + \frac{\beta\Upsilon}{\rmQ_\rmD}}\right)\left[c_1e^{\left(k_1\beta\right)}\left(k_2\left(f_1\left(\beta\right) + k_3\frac{1 - \delta^{(N_b - 1)}}{1 -\delta}\right) \right) +c_2k_3e^{(k_2\beta)}\right],
\end{align}
where $f_1(\beta) = \displaystyle\sum_{i = 1}^{N_b - 2}{\sum_{\ell =0 }^{i-1}{\frac{\delta^i}{(i - \ell)!}\left(\frac{\beta\Upsilon}{\kappa_1}\right)^{i-\ell}}}$, $f_2(\beta) = \displaystyle\sum_{i = 1}^{N_b - 2}{\sum_{\ell =0 }^{i-1}{\frac{\delta^i}{(i - \ell)!}\left(\frac{\beta\Upsilon}{\kappa_1}\right)^{i-\ell}\left(\frac{i-\ell}{\beta}\right)}}$,\\
  $k_1 =  - \frac{\Upsilon}{\kappa_1}$,  $k_2 =  - \frac{\Upsilon}{\kappa_2}$, and $k_3 = \frac{\Upsilon}{\rmQ_\rmD\log(2)}$.
\normalsize

For large codebook sizes, setting $\kappa_2 = \kappa_1$,  the backoff factor $\beta$ can be computed as the largest root of the polynomial in $\beta$ of order $N_b-1$
\begin{equation}\label{eqn:theorem_2_approx}
C_1 P_1^{N_b-1}\beta)  - C_2P_2^{N_b-2}(\beta) + C_3P_3^{N_b - 1}(\beta) + C_4\beta + C_5 = 0,
\end{equation}
with \small$P^{N_b-1}_1(\beta) =  \displaystyle\sum_{i = 0}^{N_b - 2}{\sum_{\ell =0 }^{i}{\frac{\delta^i}{(i - \ell)!}\beta\left(\frac{\beta\Upsilon}{\kappa_1}\right)^{i-\ell}}},  P^{N_b-1}_2(\beta) = \frac{P^{N_b-1}_1(\beta)}{\beta},$
$P^{N_b-1}_3(\beta) =\displaystyle \sum_{\ell = 0}^{N_b-2}{\frac{\beta}{\ell!}{\left(\frac{\beta\Upsilon}{\kappa_1}\right)^\ell}}$\normalsize, and \\\small $C_1 =(1 - \delta)(-\frac{\Upsilon}{\kappa_1}-1)\frac{\Upsilon}{\rmQ_\rmD}$, $C_2 = -(1-\delta)$, $C_3 = \Upsilon(1-\delta)\delta^{N_b-2}$,  $C_4 =\left (1 + 2\frac{\Upsilon}{\kappa_1}\right)\frac{\Upsilon}{\rmQ_\rmD}$, and $C_5 = 2\frac{\Upsilon}{\rmQ_\rmD\log(2)}+\frac{\Upsilon}{\kappa_1}$.\normalsize   
\end{theorem}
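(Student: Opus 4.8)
The plan is to reduce the functional optimization in~(\ref{eqn:avggoodputbackoff}) to a pointwise one. Following the discussion preceding the theorem, I would interchange the maximization with the integration, so that it suffices to maximize the integrand $g(\beta) := \log_2\!\left(1 + \tfrac{\beta\Upsilon}{\rmQ_\rmD}\right)\bbP\!\left(\mathsf{SIR}\geq\beta\Upsilon\right)$ over $\beta\in[0,1]$, where here $\Upsilon$ denotes the delayed effective channel power, so that $\mathsf{SIR}^\sft=\Upsilon/\rmQ_\rmD$ and the backed-off threshold is $\beta\Upsilon/\rmQ_\rmD$. In the interference-free case $\mathsf{SIR}=Z/\rmQ_\rmD$ with $Z$ the effective channel power of Lemma~\ref{lem:quant_apprx}, so the second factor is exactly the complementary CDF of $Z$ evaluated at $z=\beta\Upsilon$, i.e. $\bbP(\mathsf{SIR}\geq\beta\Upsilon)=1-F_Z(\beta\Upsilon)=c_2e^{-\beta\Upsilon/\kappa_2}-c_1e^{-\beta\Upsilon/\kappa_1}\sum_{i=0}^{N_b-2}\sum_{\ell=0}^{i}\frac{\delta^i}{(i-\ell)!}\big(\tfrac{\beta\Upsilon}{\kappa_1}\big)^{i-\ell}$. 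Since $g$ is smooth with $g(0)=0$, $g(\beta)>0$ on $(0,\infty)$ and $g(\beta)\to0$ as $\beta\to\infty$, its maximizer is an interior stationary point (clipped to $\beta=1$ if the root exceeds $1$), so the plan is to solve $g'(\beta)=0$.

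Next I would differentiate. Writing $L(\beta)=\log_2(1+\beta\Upsilon/\rmQ_\rmD)$ gives $L'(\beta)=k_3/(1+\beta\Upsilon/\rmQ_\rmD)$ with $k_3=\Upsilon/(\rmQ_\rmD\log 2)$. In the polynomial factor of the CCDF I would split the diagonal terms $\ell=i$, which contribute the $\beta$-independent geometric sum $\sum_{i=0}^{N_b-2}\delta^i=\frac{1-\delta^{N_b-1}}{1-\delta}$, from the strictly sub-diagonal terms, which are exactly $f_1(\beta)$; since $\frac{d}{d\beta}\big(\tfrac{\beta\Upsilon}{\kappa_1}\big)^{i-\ell}=\tfrac{i-\ell}{\beta}\big(\tfrac{\beta\Upsilon}{\kappa_1}\big)^{i-\ell}$ one gets $f_1'(\beta)=f_2(\beta)$. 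Differentiating the CCDF term by term, substituting into the product rule $g'=L'\,\bbP(\mathsf{SIR}\geq\beta\Upsilon)+L\,\tfrac{d}{d\beta}\bbP(\mathsf{SIR}\geq\beta\Upsilon)$, grouping the two exponential rates $e^{k_1\beta}=e^{-\beta\Upsilon/\kappa_1}$ and $e^{k_2\beta}=e^{-\beta\Upsilon/\kappa_2}$ (with $k_1=-\Upsilon/\kappa_1$, $k_2=-\Upsilon/\kappa_2$), and setting $g'=0$ yields the stationarity condition~(\ref{eqn:theorem_2}); $\beta=0$ is a trivial root, the claimed $\beta^*$ is the non-trivial one, and a sign check on $g'$ confirms it is a maximum.

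For the high-resolution form I would use that a large codebook $2^\rmB$ forces $\delta$ small, hence $\kappa_2/\kappa_1=1/(1-\delta)\to1$; setting $\kappa_2=\kappa_1$ makes $e^{k_1\beta}$ a common factor of~(\ref{eqn:theorem_2}), which then cancels. Multiplying through by $(1+\beta\Upsilon/\rmQ_\rmD)$ and linearizing the remaining $\log_2(1+\beta\Upsilon/\rmQ_\rmD)$ consistently with the operating regime clears the transcendental factors, after which collecting powers of $\beta$ — the leading one, $\beta\big(\tfrac{\beta\Upsilon}{\kappa_1}\big)^{N_b-2}$, coming from the $i=N_b-2$, $\ell=0$ term of $P_1^{N_b-1}$ — produces a polynomial of degree $N_b-1$. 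Reading off the groupings as $P_1^{N_b-1},P_2^{N_b-1},P_3^{N_b-1}$ and the remaining constants as $C_1,\dots,C_5$ gives~(\ref{eqn:theorem_2_approx}), whose largest real root in $[0,1]$ is $\beta^*$.

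The main obstacle is twofold. Conceptually, the delicate step is the interchange of maximization and integration: the maximizer of the ergodic goodput $\bar\Lambda^\rmb$ need not coincide with that of the pointwise surrogate $g(\beta)$, so the statement is best read as ``optimal for the pointwise objective'' (and a good near-optimal heuristic for $\bar\Lambda^\rmb$). Technically, the labor is the bookkeeping of the product-rule derivative — isolating $f_1,f_2$ and the geometric constant and tracking the two exponential rates $\kappa_1,\kappa_2$ cleanly — together with, for the polynomial form, reconciling the $\log_2(1+\cdot)$ term with a genuine polynomial (which forces both the $\kappa_2=\kappa_1$ reduction and a linearization) and then verifying that the intended non-trivial / largest root indeed lies in $[0,1]$ and is a maximum rather than a minimum or inflection.
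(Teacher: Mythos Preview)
Your proposal is correct and follows essentially the same route as the paper: interchange the maximization with the integral to reduce to the pointwise objective $g(\beta)=\log_2(1+\beta\Upsilon/\rmQ_\rmD)\,\bbP(\mathsf{SIR}\geq\beta\Upsilon)$, differentiate via the product rule using the CCDF of Lemma~\ref{lem:quant_apprx} (splitting off the diagonal geometric sum from $f_1$ and identifying $f_1'=f_2$), and for the polynomial form set $\kappa_2=\kappa_1$ to cancel the exponentials and linearize $\log_2(1+\beta\Upsilon/\rmQ_\rmD)\approx\beta\Upsilon/\rmQ_\rmD$ (the paper justifies this last step by taking $\rmQ_\rmD$ large). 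Your added caveat about the interchange being a pointwise heuristic rather than a rigorous optimization of $\bar\Lambda^\rmb$ is well taken and, if anything, more careful than the paper's own treatment.
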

\begin{proof}
The proof is provided in Appendix \ref{app:app_3}.
\end{proof}
The backoff factor $\beta^*$ in the presence of feedback delay is a function of the distance $\rmD$ and the correlation coefficient $\eta$. It also varies with the $\mathsf{SIR}$ threshold at the receiver and the number of feedback bits B,  relative to the number of antennas at the base station, $N_b$. The backoff factor $\beta$ admits an algebraic solution for $N_b \leq 6$ by the Abel-Ruffini theorem.

\subsection{Rate backoff with cross tier interference and feedback delay}
In the presence of uncoordinated interference at the mobile user $M_0$, the probability of successful reception is given by (\ref{eqn:theorem_1}). The backoff factor  $\beta^*$ computation follows similarly to the case of rate backoff with feedback delay. 

\begin{theorem}
The backoff factor $\beta^*$, that maximizes the average goodput for a limited feedback beamforming MISO system, in the presence of cross tier interference with maximum density $\lambda_f(\rmD)$ is the non-trivial solution of
\small
\begin{align}
&\nonumber \left(\frac{1}{1 + \frac{\beta\Upsilon}{\rmQ_\rmD}}\right)\left[ \left(A_1\omega_1(\beta\Upsilon) + A_2\right)e^{-\omega_1(\beta\Upsilon)}  + c_2e^{-\omega_2(\beta\Upsilon)}\right] = \\
&-\log_2\left(1 + \frac{\beta\Upsilon}{\rmQ_\rmD}\right) \left[\frac{\delta_f}{\beta}e^{-\omega_1(\beta\Upsilon)}\left((A_1 - A_2)\omega_1(\beta\Upsilon) - A_1\omega_1^2(\beta\Upsilon)\right) - \frac{c_2\delta_f}{\beta}\omega_2(\beta\Upsilon)e^{-\omega_2(\beta\Upsilon)}\right].
\end{align}
\normalsize
For large codebook sizes,  the backoff factor $\beta^*$ is computed as the $\delta_f$th square root of the roots of the polynomial in $\beta^{\delta_f}$ of order $2$, 
\small
\begin{equation}
A_1\delta_f\omega_1^2(\beta\Upsilon) + \delta_f\left(c_2 - A_1(1 + \log(2)) + A_2  \right)\omega_1(\beta\Upsilon) - \frac{A_2 + c_2}{\log(2)},
\end{equation}
\normalsize
such that $\beta^* \in \left[0\;1 \right]$.
\end{theorem}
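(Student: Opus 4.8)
The plan is to mimic the derivation of the previous theorem (rate backoff with feedback delay), but now fed by the interference-aware success probability of Theorem~\ref{th:Psuccess}. Starting from
\begin{equation}
\bar{\Lambda}^{\sfb} = \int_0^\infty \rmR^\rmt(\beta\Upsilon)\,\bbP\!\left[\mathsf{SIR} \geq \beta\Upsilon\right] f_\Upsilon(\Upsilon)\,d\Upsilon ,
\end{equation}
I would substitute $\bbP\!\left[\mathsf{SIR}\geq\beta\Upsilon\right] = \bigl(A_1\omega_1(\beta\Upsilon)+A_2\bigr)e^{-\omega_1(\beta\Upsilon)} + c_2 e^{-\omega_2(\beta\Upsilon)}$ from (\ref{eqn:theorem_1}) and, exactly as in Section~\ref{sec:Backoff}, interchange the maximization over $\beta$ with the integral: since the integrand is nonnegative, the (suboptimal) surrogate the paper adopts is to maximize, pointwise in $\Upsilon$,
\[
g(\beta) \;:=\; \log_2\!\left(1+\tfrac{\beta\Upsilon}{\rmQ_\rmD}\right)\,\bbP\!\left[\mathsf{SIR}\geq\beta\Upsilon\right].
\]
Thus the task reduces to solving $g'(\beta)=0$ on $[0,1]$.

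For the first-order condition I would differentiate $g$ by the product rule. The only nonstandard ingredient is that each $\omega_j(\beta\Upsilon)$ is a fixed constant times $(\beta\Upsilon)^{\delta_f}$ with $\delta_f=2/\alpha_f$, so $\tfrac{d}{d\beta}\omega_j(\beta\Upsilon)=\tfrac{\delta_f}{\beta}\,\omega_j(\beta\Upsilon)$; hence $\tfrac{d}{d\beta}\bigl[\omega_1 e^{-\omega_1}\bigr]=\tfrac{\delta_f}{\beta}\,\omega_1(1-\omega_1)e^{-\omega_1}$ and $\tfrac{d}{d\beta}\bigl[e^{-\omega_j}\bigr]=-\tfrac{\delta_f}{\beta}\,\omega_j e^{-\omega_j}$, while $\tfrac{d}{d\beta}\log_2\!\left(1+\tfrac{\beta\Upsilon}{\rmQ_\rmD}\right)=\tfrac{\Upsilon/\rmQ_\rmD}{(1+\beta\Upsilon/\rmQ_\rmD)\log(2)}$. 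Grouping the terms carrying the derivative of the logarithmic factor on one side and those carrying the derivative of the success probability on the other, and discarding the trivial root $\beta=0$ (where $g=0$), yields precisely the stated transcendental equation; existence of a maximizer in $[0,1]$ follows from the continuity and differentiability of $\bar{\Lambda}^{\sfb}$ in $\beta$ noted before the theorem.

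For the large-codebook form I would take $\delta\to 0$. Then $\kappa_1=2\eta^2(1-\delta)\to\kappa_2=2\eta^2$, so $\omega_1(\cdot)$ and $\omega_2(\cdot)$ collapse to a common $\omega(\cdot)$ and the factor $e^{-\omega(\beta\Upsilon)}$ cancels from the first-order condition. Combined with the same simplification of the logarithmic prefactor used for the large-codebook form of the previous theorem, the equation reduces to a quadratic in $\omega(\beta\Upsilon)$, namely $A_1\delta_f\,\omega^2 + \delta_f\bigl(c_2-A_1(1+\log(2))+A_2\bigr)\omega - \tfrac{A_2+c_2}{\log(2)}=0$. Since $\omega(\beta\Upsilon)=\lambda_f C_f\bigl(\tfrac{\beta\Upsilon\rmQ_\rmD}{\kappa_1}\bigr)^{\delta_f}$ is a fixed multiple of $\beta^{\delta_f}$, this is a quadratic in $\beta^{\delta_f}$; solving it, raising the admissible root to the power $1/\delta_f=\alpha_f/2$, and keeping the value lying in $[0,1]$ gives $\beta^*$.

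The main obstacle is bookkeeping rather than conceptual. Passing from the exact first-order equation to the degree-two polynomial needs more than $\delta\to 0$: one must also replace $\log_2(1+\beta\Upsilon/\rmQ_\rmD)$ and $1/(1+\beta\Upsilon/\rmQ_\rmD)$ by the surrogates used in the large-codebook simplification and check that the discarded terms are genuinely lower order, and one must verify that, under the density constraint $\lambda_f\le\lambda_f(\rmD)$ from the Corollary, the quadratic indeed has a root producing $\beta^*\in[0,1]$, so that the backoff factor is well defined. As in the delay-only theorem, it is also worth flagging that the resulting $\beta^*$ depends on $\Upsilon$, $\rmD$ and $\eta$.
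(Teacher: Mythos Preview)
Your proposal is correct and follows exactly the approach the paper intends: the paper's own proof simply says ``follows similarly to the proof of Theorem~2, with the expression of the probability of successful reception given by (\ref{eqn:theorem_1}),'' and you have filled in precisely those details---pointwise maximization of $\log_2(1+\beta\Upsilon/\rmQ_\rmD)\,\bbP[\mathsf{SIR}\geq\beta\Upsilon]$, the product-rule differentiation using $\tfrac{d}{d\beta}\omega_j=\tfrac{\delta_f}{\beta}\omega_j$, and the large-codebook collapse $\kappa_1=\kappa_2$ together with the $\log_2(1+x)\approx x$, $1/(1+x)\approx 1$ surrogates from Appendix~\ref{app:app_3}. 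Your caveats about verifying that the admissible root lands in $[0,1]$ and that $\beta^*$ depends on $\Upsilon$, $\rmD$, $\eta$ are apt but go beyond what the paper itself supplies.
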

\begin{proof}
The proof follows similarly to the proof of Theorem 2, with the expression of the probability of successful reception given by $(\ref{eqn:theorem_1})$. 
\end{proof}
The backoff factor $\beta^*$  is a decreasing function of the velocity of $M_0$ and thus an increasing function of $\eta$. 
This implies that as the velocity of the mobile user increases, the average achievable rate in the system, with rate backoff, decreases, and comes closer to the average achievable rate without backoff. It is  further a decreasing function of the femtocell contention density $\lambda_f$, and the distance $\rmD$ between $M_0$ and $B_0$. Thus, as the $\mathsf{SIR}$ at $M_0$ increases, the average achievable rate with backoff increases. 

The backoff factor $\beta^*$ is an increasing function of $\delta$. Increasing the number of antennas, and fixing the feedback channel rate $\rmB$, the average achievable rate with backoff increases relative to that without backoff. Similarly, increasing the rate on the feedback channel, for a fixed number of transmit antennas, results in a smaller backoff factor and a lower backoff rate.

\section{Simulation Results and Discussion}\label{sec:Simul}
We consider a single macrocell of radius $R_c = 1$km, overlayed with a tier of randomly distributed femtocells. The femtocells are distributed according to a homogeneous PPP with an average density of $\lambda_f$ femtocells per cell-site.  We adopt a distance based pathloss model corresponding to the IMT-2000 channel model \cite{IMT2000} for outdoor and indoor pathloss. For analytical simplicity, we do not consider random Log-normal shadow fading. We consider frequency flat Rayleigh fading with individual complex entries distributed as $\mathcal{CN}(0,1)$. We assume fixed wall partition losses corresponding to indoor-to-outdoor and outdoor-to-indoor propagation, equal to $5$ dB. The outdoor and indoor-to-outdoor pathloss exponents are set to $3.8$ and the carrier frequency is 2 GHz.  The users are uniformly distributed inside each cell.

We start by plotting the empirical CDF of the probability of outage at a mobile user with cross tier interference and delay. For a user average velocity of $20$km/h and a target $\mathsf{SIR}$ threshold of 5 dB, Figure \ref{fig:verifyApproxPout} compares the empirical CDF of the probability of outage versus $\mathsf{SNR}$ at the mobile user, to the expression of the probability of outage obtained using Theorem 1. The figure shows that the bound in Theorem 1 closely represents the probability of outage for all possible locations of the mobile user inside the macrocell. 

To get a practical sense of the number of femtocells obtained from Equation (\ref{eqn:density}), we plot, in Figure \ref{fig:lambda_f}, the maximum number of transmitting femtocells $N_f = \pi R_c^2 \lambda_f$ as a function of  $\mathsf{SNR}$ at $M_0$, with a 10 percentile outage probability requirement, for increasing $\delta = 2^{-\frac{\rmB}{N_b-1}}$. The density $\lambda_f$ is averaged over $1000$ uniformly distributed users in the macrocell, for an average user velocity of $20$Km/h. As the number of  feedback bits $\rmB$ increases, for example, for $N_b = N_f = 4$ antennas, the number of femtocell interferers that can be allowed in the system without violating the probability of outage requirement increases, this increase is more discernable for higher feedback rate, $\rmB > 6$. Similarly, as the number of antennas at the base station $N_b$ increases, for a fixed number of feedback bits, $\lambda_f$ increases.

Figures \ref{fig:ratebackoffDelay} and \ref{fig:ratebackoff} illustrate the rate backoff concept applied at the transmitter to maximize the average goodput, respectively for delay and cross tier interference and delay. Figure \ref{fig:ratebackoffDelay} plots the ergodic goodput for a mobile user velocity of $20$Km/h, $N_b = N_f = 2$ and delay of 2 time frames.  The goodput achieved when the probability of outage is equal to zero, i.e. the rate supported by the channel (the throughput) as a function of the $\mathsf{SNR}$ at the receiver is shown for comparison. We observe that the gap between the goodput and throughput is considerable. This gap increases with $\mathsf{SNR}$ as the rate of increase for the goodput depends on the probability of outage. Applying rate backoff at the transmitter recovers the rate of increase of the achievable rate with the $\mathsf{SNR}$ to that of the throughput. The gap between the backoff rate and that of the throughput remains because of the conservative transmitted rate, when the estimated $\mathsf{SIR}^\sft$ falls short of the estimated $\mathsf{SIR}$ at $M_0$. 
We also plot the backoff goodput using the approximation in (22) for small $\delta$ values. The approximation is shown to be accurate for $N_b = 2$ antennas. It gives, as expected, a lower bound on the backoff factor $\beta^*$ and the backoff goodput. 

Figure \ref{fig:ratebackoff} shows the ergodic rate for $No_f = 95$. It compares the average goodput achieved with limited feedback beamforming, with and without rate backoff, to that achieved using open loop random beamforming \cite{viswanath2002}, for $N_b = N_f = 4$ transmit antennas, and $\rmB = 5$ feedback bits. The throughput is shown for comparison.
We observe that limited feedback beamforming achieves an average rate gain of 5 dB over random beamforming for $\bbE[\rmR[n]] = \log_2\left(1 + \mathsf{SINR}\right)$. For the goodput $\bar{\Lambda}$, this gain is on average 4 dB. 
This suggests that limited feedback beamforming, although prone to errors due to delay and quantization, is still a viable option for macrocell transmission use in heterogeneous networks MISO systems. 
Similarly to the system without interference, applying rate backoff at the transmitter recovers the rate of increase of the achievable rate to that of the throughput. The achievable rate with backoff is very close to that achieved with random beamforming, in the absence of outage. 

Figure \ref{fig:alphadeltavel} plots the backoff factor $\beta^*$ as a function of the velocity of the mobile user and the number of feedback bits relative to the number of transmit antennas, $\delta$. The figure shows that the backoff factor $\beta$ is a decreasing function of the velocity $v \in [20\; 60]$ km/h. Rate backoff performs better for high temporal correlation coefficient $\eta$, corresponding to pedestrian or low speed mobile users. This is reasonable because the approximation in (\ref{eqn:approx_1}) assumes low mobility. $\beta^*$ is an increasing function of $\delta = 2^{-\frac{\rmB}{N_b-1}}$. Increasing the number of antennas for fixed feedback rate increases the backoff rate achieved. Similarly, decreasing the number of feedback bits, for a fixed number of antennas, decreases the backoff factor. 

\section{Conclusion}\label{sec:Conclusion}
In this paper, we proposed a MISO limited feedback system for use on the downlink of a heterogeneous cellular system. We evaluated the effect of uncoordinated cross tier interference and feedback delay on the performance of the system. 
We derived a closed form expression for the probability of outage as a function of the distance from the mobile receiver to its base station, the velocity of the user and the femtocell density. We showed that, although prone to rate outage due to the inaccurate CSI at the transmitter, limited feedback MISO is a good candidate for use on the downlink of heterogeneous cellular networks. To maximize the average achievable rate, we proposed rate backoff techniques at the transmitter. We showed that rate backoff maximizes the goodput achieved, and hence increases the long term achievable rate in the system. Future work includes applying rate backoff to the second tier network considering both cross tier and same tier interference. 

\appendices

\section{Proof of Theorem 1} \label{app:app_1}
The probability of successful transmission is computed as
\small
\begin{subequations}
\begin{eqnarray}
\nonumber \bbP\left[\mathsf{SIR} \geq \Upsilon \right] &=& \bbP\left[|\bh^{*}_0[n]\bff_0[n-d]|^2 \geq \Upsilon \rmQ_\rmD I_{f,c}\right]\\
&\stackrel{a}=& \nonumber \int_{0}^{\infty}{\left[c_2\exp{\left(-\frac{s \rmQ_\rmD \Upsilon}{\kappa_2}\right)} - c_1\exp{\left(-\frac{s \rmQ_\rmD \Upsilon}{\kappa_1}\right)} \sum_{i = 0}^{N_b -2}{\sum_{\ell =0 }^{i}{\frac{\delta^i}{(i - \ell)!}\left(\frac{s \rmQ_\rmD\Upsilon}{\kappa_1}\right)^{i - \ell}}}\right]\rmd\bbP\left(I_{f,c} \leq s\right)}\\
&=& \int_{0}^{\infty}{\left[c_2\exp{\left(-\frac{s \rmQ_\rmD \Upsilon}{\kappa_2}\right)} - c_1\exp{\left(-\frac{s \rmQ_\rmD \Upsilon}{\kappa_1}\right)} \sum_{i = 0}^{N_b -2}{{\delta^i}}\right]\rmd\bbP\left(I_{f,c} \leq s\right)}\\
&-& \int_{0}^{\infty}{\left[ c_1\exp{\left(-\frac{s \rmQ_\rmD \Upsilon}{\kappa_1}\right)} \sum_{i = 1}^{N_b - 2}{\sum_{\ell =0 }^{i-1}{\frac{\delta^i}{(i - \ell)!}\left(\frac{s \rmQ_\rmD \Upsilon}{\kappa_1}\right)^{i - \ell}}}\right]\rmd\bbP\left(I_{f,c} \leq s\right)}
\end{eqnarray}
\end{subequations}
\normalsize
where step (a) follows from conditioning on $I_{f,c}$ and substituting for the CCDF of $|\bh^{*}_0[n]\bff_0[n-d]|^2$. 

The integral in (24a) is expressed in terms of the Laplace transform of  $I_{f,c}$, $\cL_{I_{f,c}}(\theta) = \bbE\left[e^{-I_{f_c}\theta}\right] =\exp{\left(-\lambda_fC_f\theta^{\delta_f}\right)}$ (from \cite{Hunter2008}), evaluated at $\theta = \frac{\rmQ_\rmD\Upsilon}{\kappa_2}$ and $\theta =  \frac{\rmQ_\rmD\Upsilon}{\kappa_1}$, respectively, for $C_f = \pi\delta_f\Gamma(\delta_f)\Gamma(1 - \delta_f)$,
\small
\begin{eqnarray}
\nonumber \int_{0}^{\infty}{\left[c_2\exp{\left(-\frac{s \rmQ_\rmD \Upsilon}{\kappa_2}\right)} - c_1\exp{\left(-\frac{s \rmQ_\rmD \Upsilon}{\kappa_1}\right)} \sum_{i = 0}^{N_b -2}{{\delta^i}}\right]\rmd\bbP\left(I_{f,c} \leq s\right)}
=\nonumber  c_2 \cL_{I_{f,c}}\left(\frac{\Upsilon \rmQ_\rmD}{\kappa_2} \right) - \left(c_1\sum_{i= 0}^{N_b -2}{\delta^i}\right)\cL_{I_{f,c}}\left(\frac{\Upsilon \rmQ_\rmD}{ \kappa_1} \right) \\
\nonumber=  c_2 \exp\left(-\lambda_f C_f\left(\frac{\Upsilon \rmQ_\rmD}{\kappa_2}\right)^{\delta_f}\right)  - \left(c_1\sum_{i= 0}^{N_b -2}{\delta^i}\right)\exp\left(-\lambda_f C_f\left(\frac{\Upsilon \rmQ_\rmD}{\kappa_1}\right)^{\delta_f}\right).
\end{eqnarray}
\normalsize
The integral in  (24b) is evaluated as
\small
\begin{eqnarray}
\nonumber \int_{0}^{\infty}{\left[ c_1\exp{\left(-\frac{s \rmQ_\rmD \Upsilon}{\kappa_1}\right)} \sum_{i = 1}^{N_b - 2}{\sum_{\ell =0 }^{i-1}{\frac{\delta^i}{(i - \ell)!}\left(\frac{s \rmQ_\rmD \Upsilon}{\kappa_1}\right)^{i - \ell}}}\right]\rmd\bbP\left(I_{f,c} \leq s\right)}
\nonumber \stackrel{a}=c_1 \sum_{i = 1}^{N_b - 2}{\sum_{\ell =0 }^{i-1}{\frac{\delta^i}{(i - \ell)!}\left(-\frac{\Upsilon \rmQ_\rmD}{\kappa_1}\right)^{i-\ell}\frac{d^{i-\ell}}{d\theta^{i-\ell}}\cL_{I_{f,c}}\left(\theta\right)}}\\
\nonumber \stackrel{b}=c_1 \sum_{i = 1}^{N_b - 2}{\sum_{\ell =0 }^{i-1}{\frac{\delta^i}{(i - \ell)!}\left(-\frac{\Upsilon \rmQ_\rmD}{\kappa_1}\right)^{i-\ell}\left[-\lambda_f C_f \Pi_{m = 0}^{i-\ell - 1}{(\delta_f -  m)}\theta^{\delta_f - i + \ell}\exp\left(-\lambda_f C_f \theta^{\delta_f}\right)\right]}},
\end{eqnarray}
\normalsize
where (a) follows from the identity $\cL\left[x^kf(x)\right] = (-1)^kF^{(k)}(s)$ for the Laplace transform, with  $F^{(k)}(s)$ representing the $k$th derivative of $F(s)$. Step (b) follows from evaluating the $k$th derivative of the Laplace Transform  $\cL_{I_{f,c}}(\theta)$, \cite{Hunter2008}
\small
$$ \frac{\rmd^k}{\rmd \theta^k}  \exp{\left(-\lambda_fC_f\theta^{\delta_f}\right)} = - \left[\lambda_f C_f \Pi_{m = 0}^{k - 1}{(\delta_f -  m)}\theta^{\delta_f -k}\exp\left(-\lambda_f C_f \theta^{\delta_f}\right) \right] + \Theta(\lambda_f^2C_f^2\theta^{2\delta_f}),$$
\normalsize
and ignoring the second order terms. 

Combining the two parts of the integration, the probability of successful reception at $M_0$ is 
\small
\begin{eqnarray}
\nonumber \bbP\left[\mathrm{SIR} \geq \Upsilon \right]  &=&  c_2 \exp\left(-\lambda_f C_f\left(\frac{\Upsilon \rmQ_\rmD}{\kappa_2}\right)^{\delta_f}\right)  - \left(c_1\sum_{i= 0}^{N_b -2}{\delta^i}\right)\exp\left(-\lambda_f C_f\left(\frac{\Upsilon \rmQ_\rmD}{\kappa_1}\right)^{\delta_f}\right)\\ 
\nonumber &-& c_1 \sum_{i = 1}^{N_b - 2}{\sum_{\ell =0 }^{i-1}{\frac{\delta^i}{(i - \ell)!}\left(-\frac{\Upsilon \rmQ_\rmD}{\kappa_1}\right)^{i-\ell}\left[-\lambda_f C_f \Pi_{m = 0}^{i-\ell - 1}{(\delta_f -  m)}\theta^{\delta_f - i + \ell}\exp\left(-\lambda_f C_f \theta^{\delta_f}\right)\right]}}\\
&=& A_1\omega_1{(\Upsilon)}e^{-\omega_1(\Upsilon)} + A_2e^{-\omega_1(\Upsilon)} + c_2e^{-\omega_2(\Upsilon)},
\end{eqnarray}\normalsize
where $A_1$,  $A_2$, $\omega_1(\Upsilon)$, $\omega_2(\Upsilon)$, and $C_f$ are given in (\ref{eqn:constants}).
%
%
\section{Proof of Corollary 1}\label{app:app_2}
For small values of $\delta = 2^{-\frac{B}{N_b-1}}$, corresponding to large codebook sizes, relative to the number of transmit antennas, we make the approximation $\kappa_1 = \kappa_2 = 2\eta^2(1 - \delta)$.
This implies that $\omega_1(\lambda_f) = \omega_2(\lambda_f) = \lambda_fC_f\left(\frac{\Upsilon\rmQ_\rmD}{\kappa_1}\right)^{\delta_f}$. The approximation gives a lower bound on the probability of success and consequently an upper bound on the probability of outage, $c_2e^{-\omega_1(\lambda_f)} \leq c_2e^{-\omega_2(\lambda_f)}.$
The maximum femtocell contention density is derived from,
\small
\begin{align}
\nonumber &A_1\omega_1{(\lambda_f)}e^{-\omega_1(\lambda_f)} + A_2e^{-\omega_1(\lambda_f)} + c_2e^{-\omega_1(\lambda_f)} \geq 1 - \epsilon \longrightarrow\\
\nonumber& - A_1\left(-\omega_1{(\lambda_f)} - \frac{A_2}{A_1} - \frac{c_2}{A_1}\right)e^{-\omega_1(\lambda_f)} e^{- \frac{A_2}{A_1} - \frac{c_2}{A_1}} e^{\frac{A_2}{A_1} + \frac{c_2}{A_1}} \geq 1 - \epsilon
&\longrightarrow \quad C(\lambda_f)e^{C(\lambda_f)} \geq \frac{1 - \epsilon}{-A_1e^{\frac{A_2}{A_1} + \frac{c_2}{A_1}}},
\end{align}
\normalsize
where $C(\lambda_f) = -\omega_1{(\lambda_f)} - \frac{A_2}{A_1} - \frac{c_2}{A_1}$. It is given by the solution to the monotonically decreasing Lambert W function $\mathrm{LambertW}(\lambda_f) = C(\lambda_f)e^{C(\lambda_f)}$. Consequently,  $\lambda_f$ is given by (\ref{eqn:lambda_f}).

\section{Proof of Theorem 2}\label{app:app_3}
The expression for $\beta^*$ follows from taking the derivative with respect to $\beta$ of 
\begin{eqnarray}
\beta^* &=& \argmax_{\beta}\left[ \log_2\left(1 + \beta \Upsilon\right) \bbP\left(\mathsf{SINR} \geq \beta\Upsilon \right)\right]
\end{eqnarray}
and setting the derivative equal to zero. Under the definitions $f_1(\beta) = \sum_{i = 1}^{N_b - 2}{\sum_{\ell =0 }^{i-1}{\frac{\delta^i}{(i - \ell)!}\left(\frac{\beta\Upsilon}{\kappa_1}\right)^{i-\ell}}}$, $f_2(\beta) = \sum_{i = 1}^{N_b - 2}{\sum_{\ell =0 }^{i-1}{\frac{\delta^i}{(i - \ell)!}\left(\frac{\beta\Upsilon}{\kappa_1}\right)^{i-\ell}\left(\frac{i-\ell}{\beta}\right)}}$,\\
  $k_1 =  - \frac{\Upsilon}{\kappa_1}$,  $k_2 =  - \frac{\Upsilon}{\kappa_2}$, and $k_3 = \frac{\Upsilon}{\rmQ_\rmD\log(2)}$, the optimization expression is written as given in Theorem 2.
  
For large codebook sizes, we make the approximation $\kappa_2 = \kappa_1$. This approximation allows us to remove the exponential terms from the expression in (\ref{eqn:theorem_2}). Furthermore, we notice that
\small
$$f_2(\beta) =   \sum_{i = 0}^{N_b - 2}{\sum_{\ell =0 }^{i}{\frac{\delta^i}{(i - \ell)!}\left(\frac{\beta\Upsilon}{\kappa_1}\right)^{i-\ell}}} - \frac{1-\delta^{N_b-1}}{1-\delta} = P^{N_b-2}_2(\beta) - \frac{1-\delta^{N_b-1}}{1-\delta},$$
$$f_1(\beta) =  P^{N_b-2}_2(\beta)  -  \sum_{\ell = 0}^{N_b-2}{\frac{\beta}{\ell!}{(\frac{\beta\Upsilon}{\kappa_1})^\ell}} = P^{N_b-2}_2(\beta) - P^{N_b-2}_3(\beta).$$ 
\normalsize
For $\rmQ_\rmD$ large, we have $\log_2\left(1 + \frac{\beta\Upsilon}{\rmQ_\rmD}\right) \approx  \frac{\beta\Upsilon}{\rmQ_\rmD}$, and $\frac{1}{1 + \frac{\beta\Upsilon}{\rmQ_\rmD}} \approx 1$. The expression in (\ref{eqn:theorem_2_approx}) then follows by rearranging the terms in (\ref{eqn:theorem_2}). 
  

%
%
\bibliographystyle{IEEEtran}
\bibliography{IEEEabrv,ReferencesSalam}
\newpage

\centering{\textbf{\large{Figures}}}

\begin{figure}[h]
  \begin{center}
    \includegraphics[scale=0.65]{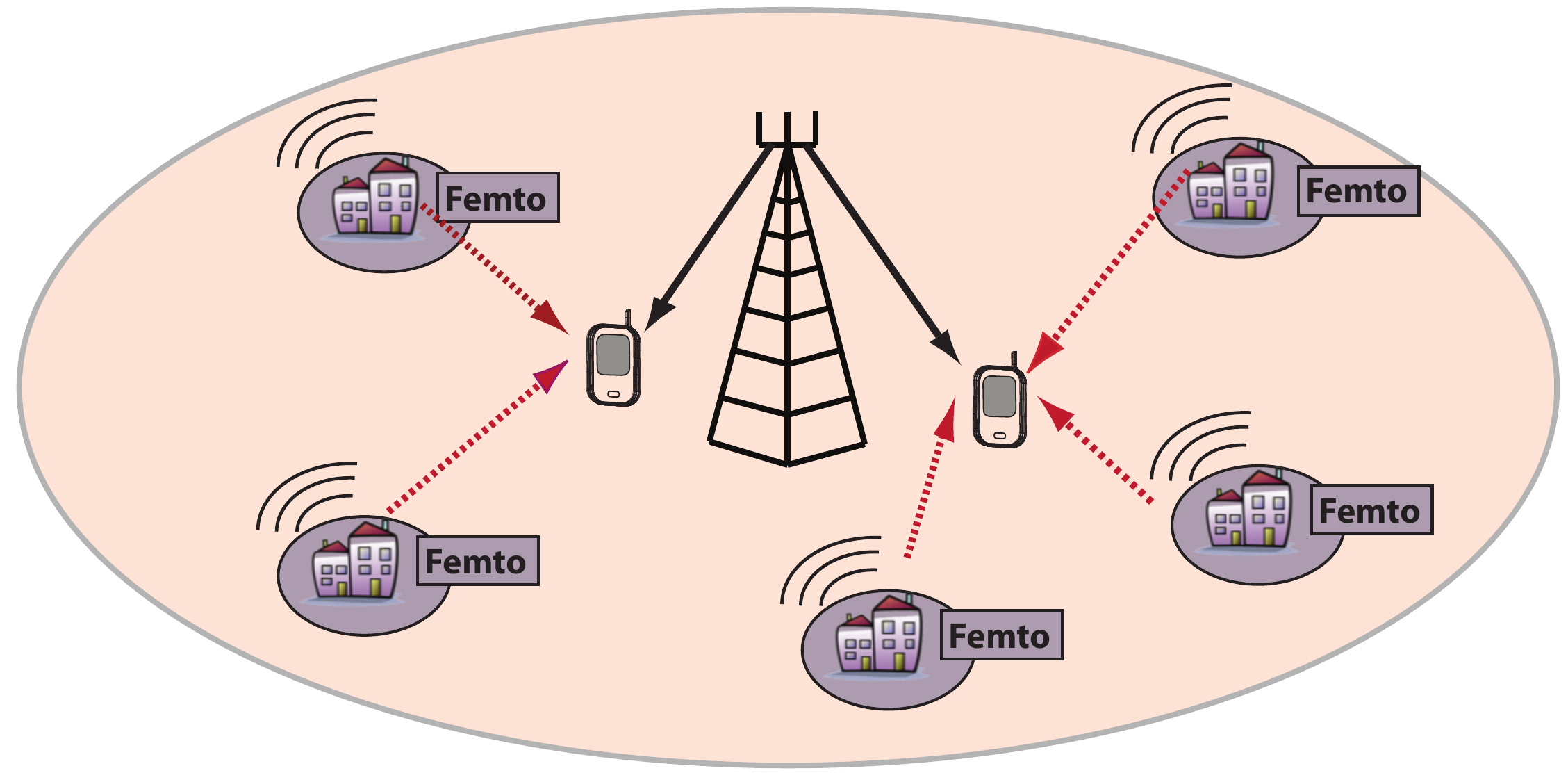}
    \caption{In the downlink scenario, mobile users $M_0$ and $M_1$ experience cross tier interference from small cells and femtocells randomly distributed inside the macrocell. The macro base station and the femtocells are equipped with multiple antennas. $M_0$ and $M_1$ have a single antenna.}
\label{fig:cellsysfemto}.
  \end{center}
\end{figure}

\begin{figure}[h]
  \begin{center}
    \includegraphics[scale=0.35]{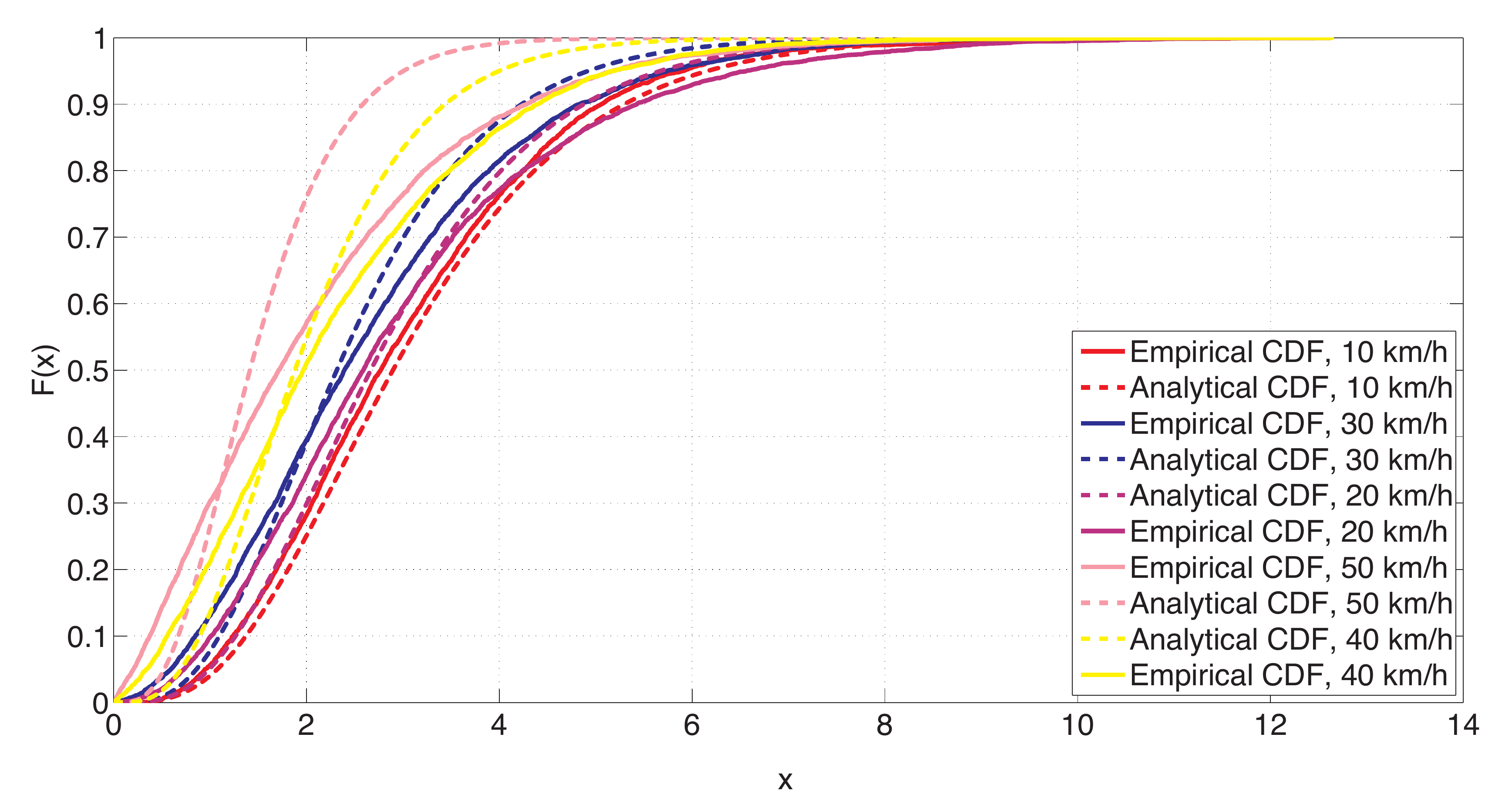}
    \caption{The CDF of the effective received power $|\bh^*_0[n]\bff_0[n-\rmd]|^2$ and of the approximation $\eta^2|\bh^*_0[n-\rmd]\bff_0[n-\rmd]|^2$, for increasing velocities of the mobile user $M_0$ $v \in \{10, 20, 30, 40, 50\}$ km/h, for a $4\times 1$ MISO downlink system with a codebook size of $\rmB = 6$, and delay $\rmd = 2$ time frames.}
\label{fig:verifyApprox}
  \end{center}
\end{figure}

%
%

\begin{figure}[h]
  \begin{center}
    \includegraphics[scale=0.35]{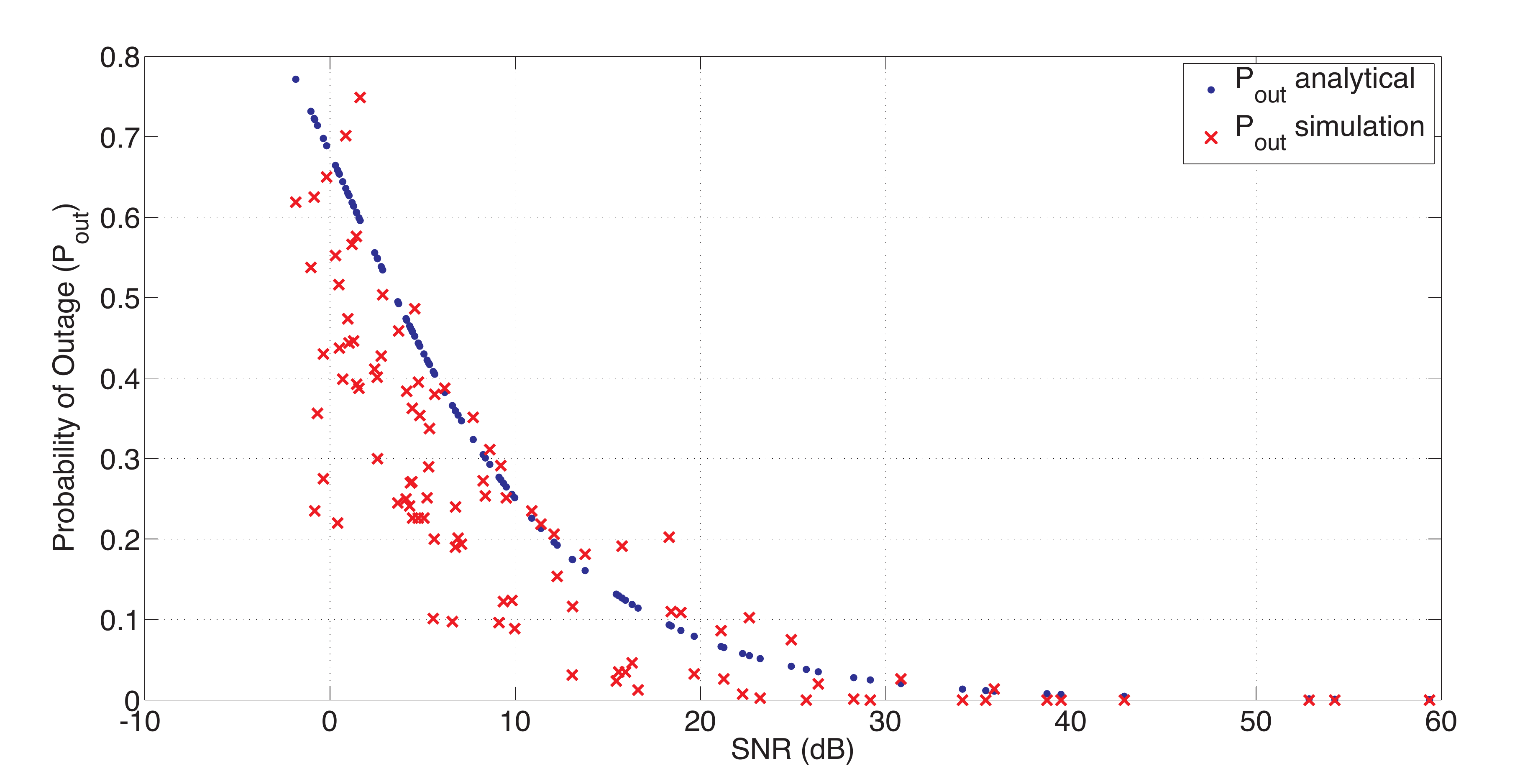}
    \caption{The probability of outage $P_{out} = \bbP[\mathsf{SIR}\leq \Upsilon]$ as a function of SNR or the distance $\rmD$ from $M_0$ to $B_0$ for $N_b = N_f = 4$,  codebook size of $\rmB = 5$, and mobile velocity of 20km/h. The analytical $P_{out}$ corresponds to that derived in Theorem 1. }
\label{fig:verifyApproxPout}
  \end{center}
\end{figure}

%
%

\begin{figure}[t]
  \begin{center}
    \includegraphics[scale = 0.50]{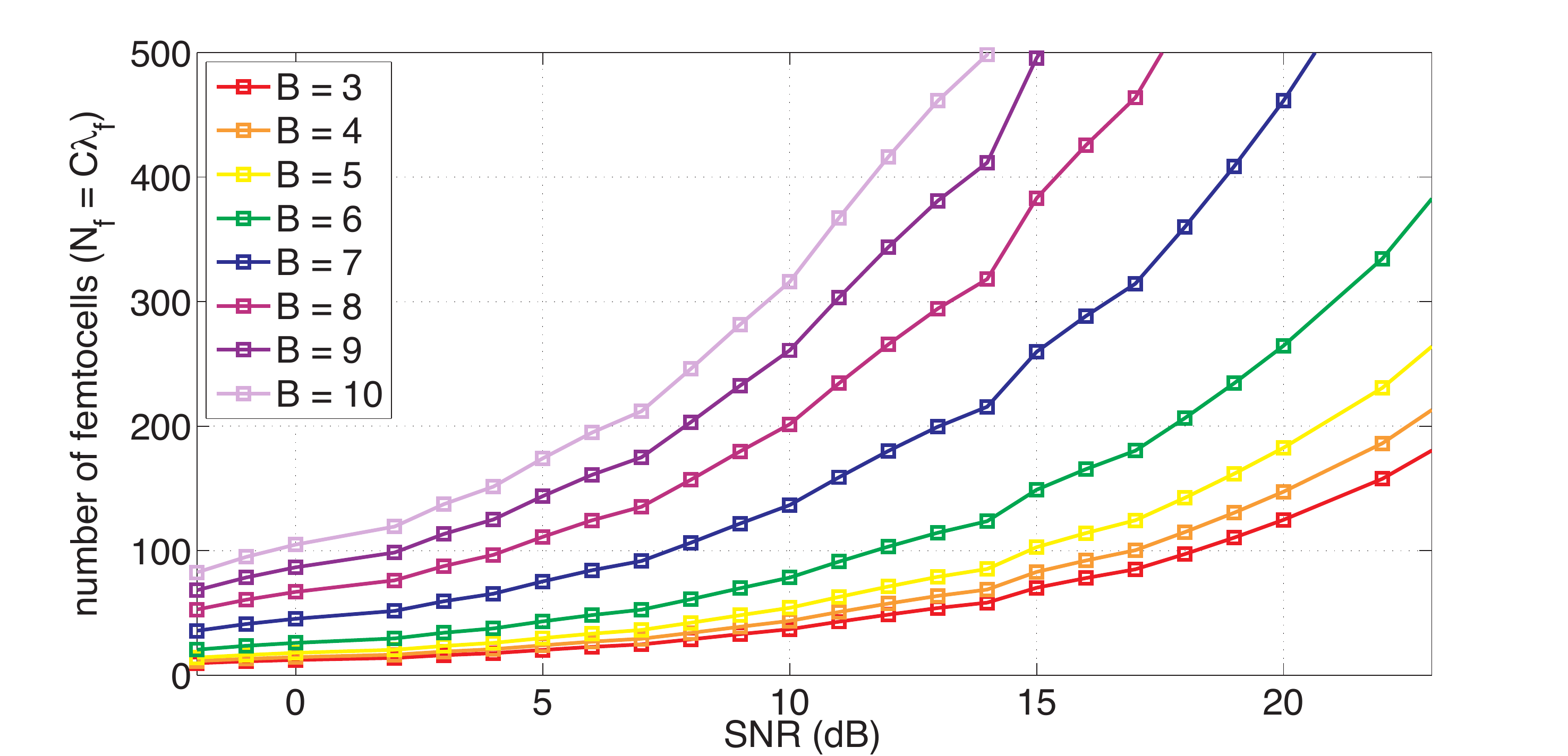}
    \caption{The maximum number of femtocells interferers in a macrocell of area of $C = \pi 1000^2$ for a $4\times1$ MISO limited feedback system with increasing number of feedback bits $\rmB$ for $N_b = N_f = 4$, velocity $v = 20$ km/h, and delay $\rmd = 2$ time frames.}
    \label{fig:lambda_f}
  \end{center}
\end{figure}

\begin{figure}[t]
  \begin{center}
  \vspace{-10pt}
    \includegraphics[scale = 0.5]{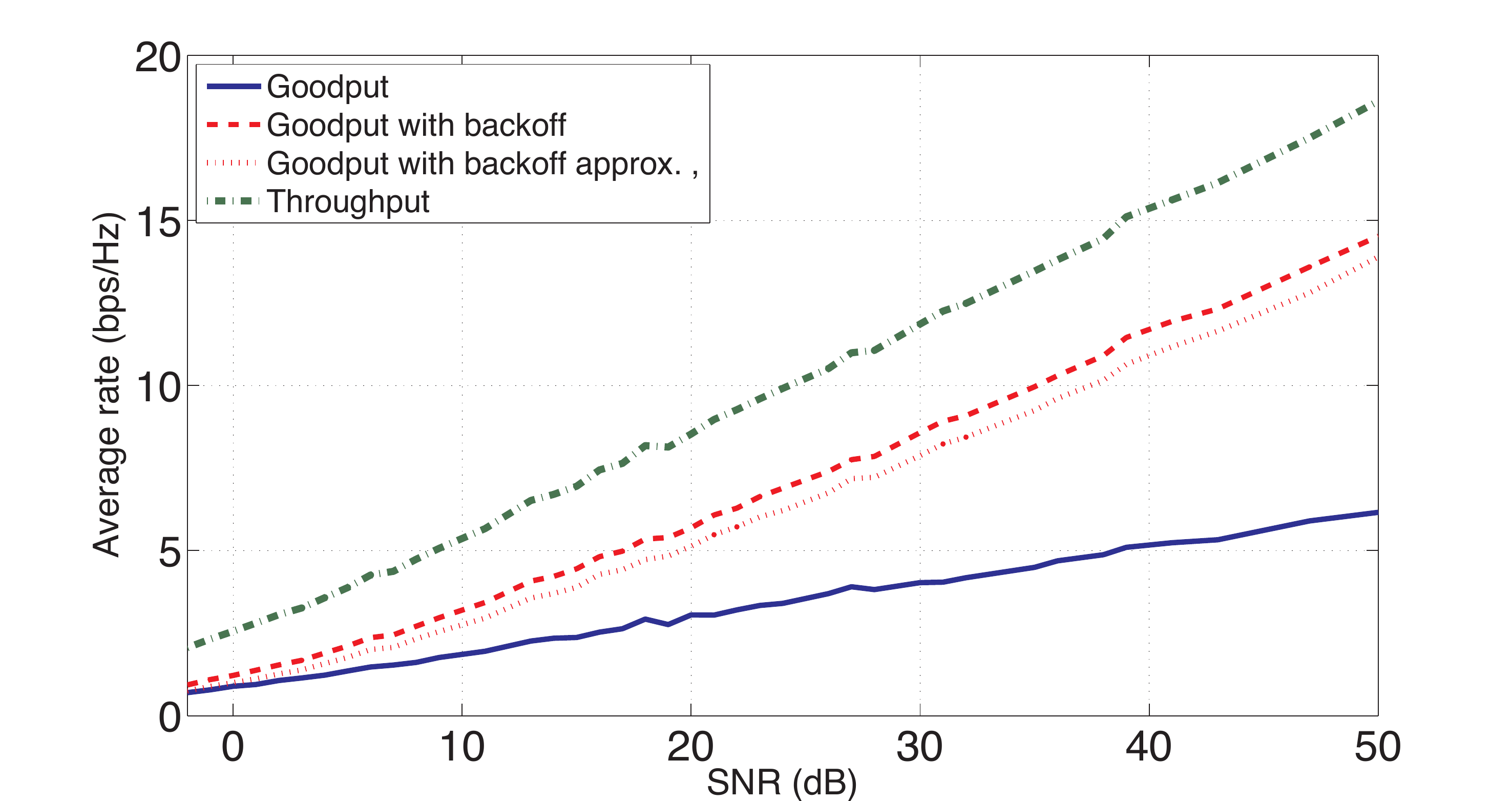}
    \caption{The average goodput for a MISO limited feedback system with $N_f = N_b = 2$, and $\rmB = 3$ as a function of SNR with and without rate backoff. The femtocell density is assumed $N_{of} = 95$ femtocells per cell site. }
    \vspace{-5pt}
    \label{fig:ratebackoffDelay}
  \end{center}
\end{figure}

\begin{figure}[t]
  \begin{center}
  \vspace{-10pt}
    \includegraphics[scale = 0.47]{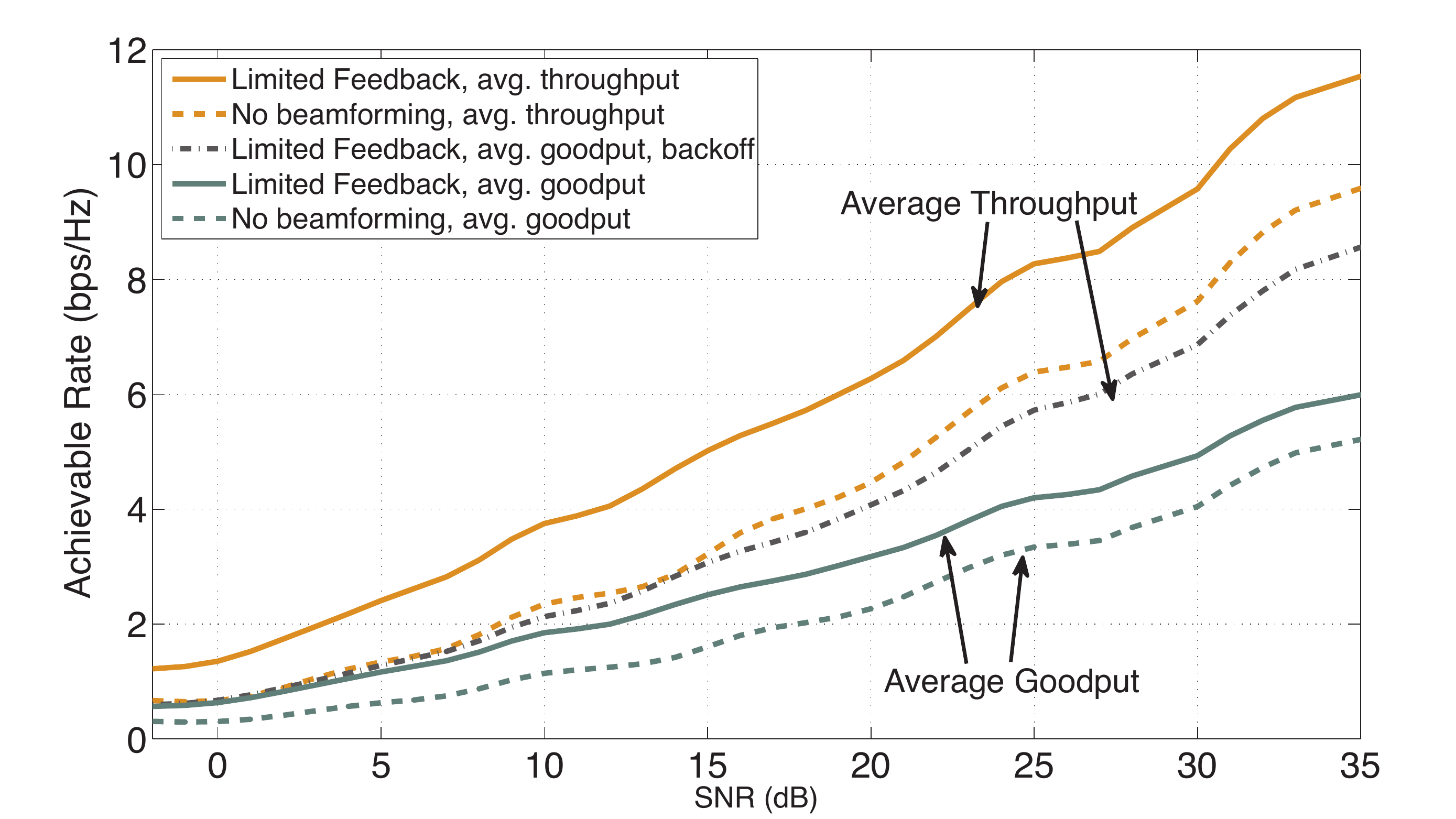}
    \caption{The average rate as a function of SNR with and without rate backoff for  a limited feedback system with $N_f = N_b = 4$, and $N_{of} = 95$ femtocells per cell site, and $\rmB = 5$ bits on the feedback channel.}
    \vspace{-5pt}
    \label{fig:ratebackoff}
  \end{center}
\end{figure}

\begin{figure}[t]
  \begin{center}
    \includegraphics[scale = 0.6]{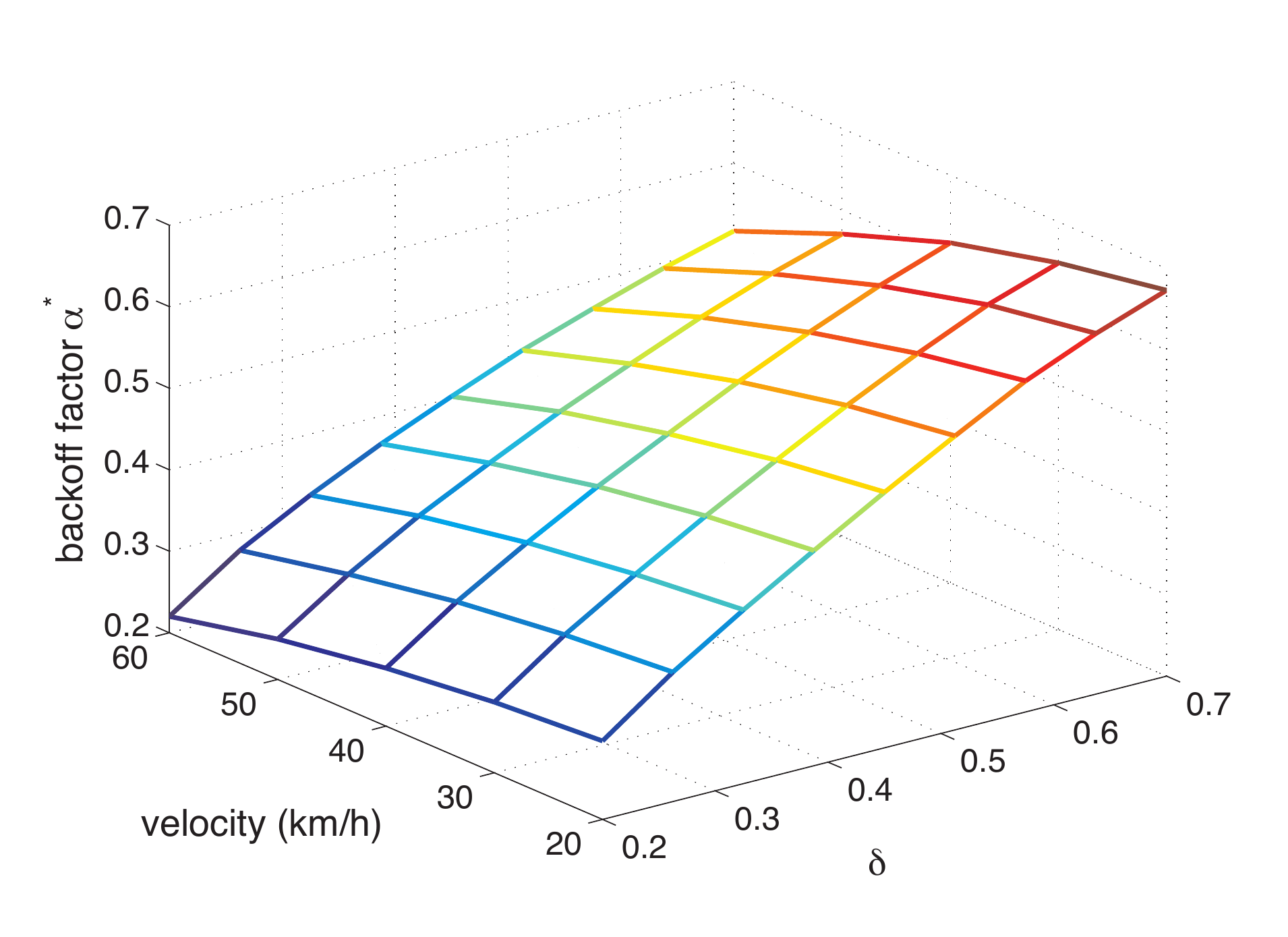}
    \caption{The backoff factor $\beta^*$ as a function of delta $\delta \in [0.2\; 0.8]$, and velocity $v \in [20\; 60] $ km/h, averaged over $100$ users uniformly distributed inside the macrocell, with $N_{of} = 95$ femtocells per cell site. }
    \label{fig:alphadeltavel}
  \end{center}
\end{figure}

\end{document}